\documentclass[letterpaper, 10 pt, conference]{ieeeconf}  

\IEEEoverridecommandlockouts                              

\usepackage{amsmath}
\usepackage{amssymb}
\usepackage{color}
\usepackage{graphicx}

\newcommand{\bluetext}[1]{{\leavevmode\color{blue}{#1}}}

\newcommand{\R}{\mathbb{R}}
\renewcommand{\t}[1]{\tilde{#1}}

\newtheorem{remark}{Remark}
\newtheorem{prop}{Proposition}
\newtheorem{lemma}{Lemma}
\newtheorem{corollary}{Corollary}
\newtheorem{definition}{Definition}
\newtheorem{theorem}{Theorem}
\newtheorem{problem}{Problem}

\title{\LARGE \bf
Predicting Viral Evolution from a Single Early Measurement \\
Using the Target Cell Limited Model
}

\author{Rahal Nanayakkara$^{1}$ and Paulo Tabuada$^{1}$
\thanks{*This work is supported by NSF award 2146828.}
\thanks{$^{1}$R. Nanayakkara and P. Tabuada are with the Electrical and Computer Engineering Department, University of California at Los Angeles, Los Angeles,
CA 90095 USA,
        {\tt\small rahaln@ucla.edu, tabuada@ee.ucla.edu}}%
}

\begin{document}

\maketitle
\thispagestyle{empty}
\pagestyle{empty}

\begin{abstract}
Recent advances in diagnostic techniques have enabled the accurate quantification of early-stage viral loads. A key problem of interest is translating these measurements into predictive clinical insights, such as forecasting a patient's onset of infectiousness and peak infection severity. In this work, we address this problem using the Target Cell Limited (TCL) model. Because the host's internal biological states are practically unobservable, predicting the viral trajectory from a single noisy viral load measurement is highly nontrivial. To overcome this, we introduce a novel coordinate transformation that converts the nonlinear viral dynamics into a monotone system. By leveraging monotone systems theory and taking into account invariant subspaces of the transformed system, we derive explicit analytical formulae that establish a strict upper bound on the peak viral load and a guaranteed lower bound on the time to infectiousness using a single early viral observation.
\end{abstract}

\section{INTRODUCTION}

The accurate prediction of viral dynamics within an infected host is a fundamental challenge in epidemiology, whose resolution would enable the development of targeted therapeutic interventions. 
Recent advances in viral diagnostic techniques have enabled the quantification of even minute viral loads, moving beyond traditional binary outcome based tests \cite{artika2022rtpcr, suo2020ddpcr}.
Consequently, quantitative measurements obtained during the early stages of infection can now be actively utilized in predicting viral evolution.

Furthermore, the global impact of the COVID-19 pandemic has created a surge of interest in using detailed, patient-specific information to improve epidemic modelling and design personalized intervention strategies \cite{hernandez2020host, hernandez2022modelling_epidemics}. 
Under this shift in paradigm, a new problem emerges: How to leverage mathematical models to translate these early, isolated viral measurements into reliable predictive clinical insights, based on detailed patient information.
Of particular interest is the accurate forecast of key milestones of the viral trajectory, such as the exact onset of patient infectiousness and the ultimate severity of the infection.

To address these questions, we investigate the Target Cell Limited (TCL) model for acute infections \cite{ciupe2017host}, commonly referred to as the UIV model as it captures three in-host states: uninfected target cells ($U$), infected cells ($I$), and viral load ($V$). 
While macroscopic compartmental frameworks, such as the SIR or SEIR models, characterize the transmission of epidemics across broad populations, the UIV model focuses on in-host dynamics, capturing the underlying mechanisms governing the evolution of an infection within a single host. 
This model also possesses the distinct advantage of having a direct relationship between the internal state $V$ 
and the measured outcomes of clinical tests.

The UIV model has been widely employed in the epidemiological literature to analyse the in-host dynamics of a multitude of diseases, including Dengue \cite{ben2015dengue}, HIV \cite{perelson2013hiv}, Hepatitis \cite{reluga2009hepatitis}, Ebola \cite{nguyen2015ebola}, Influenza \cite{baccam2006influenza}, and, more recently, COVID-19 \cite{hernandez2020host}. 
Furthermore, under an appropriate parameter restriction and relabeling of states, the UIV dynamics coincide with the standard SEIR dynamics, despite the two models describing fundamentally distinct biological phenomena at different scales.
A major challenge in analyzing the SEIR/UIV dynamics is that, unlike the structurally simpler SIR dynamics \cite{turkyilmazoglu2021sir}, their governing differential equations do not admit exact closed-form solutions \cite{weinstein2020analytic}. 
Consequently, analytically determining the peak of the infection-driving state remains a problem of significant interest within the epidemiological literature \cite{piovella2020analytical}.

To address this, various approximations have been proposed, such as \cite{heng2020approximately}, which shows that SEIR trajectories share a universal shape modified only by spatio-temporal scaling. While \cite{weinstein2020analytic} provides a continuous analytical approximant, computing the peak still necessitates numerical root-finding. Conversely, \cite{piovella2020analytical} derives a closed form formula for the peak via an adiabatic approximation; however, its accuracy degrades unless the basic reproduction number is near 1. 
Additionally, this method also lacks strict bounding guarantees and requires full knowledge of the initial condition.

To overcome these limitations, this paper presents explicit analytical formulas to establish a strict upper bound on the peak viral load $V$, and a guaranteed lower bound for the earliest possible onset of patient infectiousness. 
We achieve this by introducing a novel change of coordinates that transforms the standard UIV model into a monotone dynamical system. 
By leveraging the established theory of monotone systems and analyzing the invariant subspaces of this transformed space, we systematically bound the state trajectories, yielding the required bounds directly in terms of the initial noisy observation.

\section{BACKGROUND} \label{sct:background}

\subsection{Notation and preliminaries}

\textit{Notation :}
We denote by $\R, \R_{\geq0}, \R_{>0}$ the set of real numbers, non-negative real numbers and positive real numbers respectively. $\R^n$ denotes the $n$ dimensional Euclidean space, and $\R^n_{\geq0}$ denotes the positive orthant in $\R^n$, i.e., the set $\{(x_1,\dots,x_n) \in \R^n: x_i \geq 0, i=1,\dots,n \}$.

Given a nonlinear system of the form $\dot x = f(x,w)$, where $x \in \mathcal X \subseteq \R^n$  and $w \in \mathcal W \subseteq \R^m$, an initial condition $x_0 \in \mathcal X$ at time $t=0$, and an input signal \mbox{$\mathbf{w} : [0, \infty) \to \mathcal W$}, we denote by $\phi(t, x_0, \mathbf{w})$ its solution at time $t \in \R_{\geq 0}$.

For any $a,b \in \R^n$, we say $a \leq b$ if $a_i \leq b_i$ for all \mbox{$i=1,\dots,n$} where $a_i$ and $b_i$ represent the $i$\textsuperscript{th} coordinate of $a$ and $b$ respectively. 
A set $C \subseteq \R^n$ is said to be order-convex if for every $a,b \in C$ with $a \geq b$, and every $\lambda \in [0, 1]$, it holds that $\lambda a + (1-\lambda)b \in C$. 
For instance, any ordered convex set is also order-convex.
Next we define the notion of a monotone system with respect to this ordering relation.

\begin{definition} (Monotone System \cite{angeli2003monotone}) 
A system \mbox{$\dot x = f(x, w)$} is monotone if for any pair $x, y \in \mathcal X$ such that $x \leq y$, and any pair $\mathbf{w_1, w_2} : [0, \infty) \to \mathcal W$ such that $\mathbf{w_1}(t) \leq \mathbf{w_2}(t)$ for all $t \geq 0$, it holds that $\phi(t, x, \mathbf{w_1}) \leq \phi(t, y,\mathbf{w_2})$ for all $t \geq 0$.
\label{def:monotone_sys}    
\end{definition}

We recall the following necessary and sufficient condition for monotonicity of a control system.

\begin{theorem}(Proposition III of \cite{angeli2003monotone})
    A control system \mbox{$\dot x = f(x,w)$}, where \mbox{$x \in \mathcal X \subset \R^n$}, $w \in \mathcal W \subseteq \R^m$, $\mathcal{X}, \mathcal{W}$ are order-convex, and $f$ is continuously differentiable, is monotone if and only if:
    \begin{align}
        \frac{\partial f_i}{\partial x_j} &\geq 0, \quad \forall x \in \mathcal X, w \in \mathcal W, \quad \forall i \neq j, \\
        \frac{\partial f_i}{\partial w_j} &\geq 0, \quad \forall x \in \mathcal X, w \in \mathcal W, \quad \forall i,j.
    \end{align}
    \label{thm:monotone_cond}
\end{theorem}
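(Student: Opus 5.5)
The plan is to prove both directions of Theorem~\ref{thm:monotone_cond} by comparing trajectories. Sufficiency goes through a strict perturbation argument, and necessity through a first-order expansion in time. Throughout, $f$ is continuously differentiable, so solutions exist and are unique locally and depend continuously on initial data and inputs. Order-convexity of $\mathcal X$ and $\mathcal W$ is what lets us apply the mean value theorem along segments joining ordered points.

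\emph{Sufficiency.} Assume the two sign conditions hold. Fix $x\le y$ and inputs $\mathbf{w_1}\le\mathbf{w_2}$, and write $\xi(t)=\phi(t,x,\mathbf{w_1})$.
\begin{enumerate}
\item For $\varepsilon>0$, let $\eta_\varepsilon$ solve
\begin{equation*}
\dot\eta=f(\eta,\mathbf{w_2}(t))+\varepsilon\mathbf{1},\qquad \eta(0)=y+\varepsilon\mathbf{1}.
\end{equation*}
\item We claim $\xi(t)<\eta_\varepsilon(t)$ componentwise for all $t$ in the common interval of existence. Strict inequality holds at $t=0$. Suppose $t^*$ is the first time at which some coordinate $i$ satisfies $\xi_i(t^*)=\eta_{\varepsilon,i}(t^*)$. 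At that instant $\xi(t^*)\le\eta_\varepsilon(t^*)$ with equality in coordinate $i$.
\item Using the segment between $\xi(t^*)$ and $\eta_\varepsilon(t^*)$, which stays in $\mathcal X$ by order-convexity, and the segment between the two input values, the mean value theorem gives
\begin{equation*}
f_i(\xi(t^*),\mathbf{w_1}(t^*))\le f_i(\eta_\varepsilon(t^*),\mathbf{w_2}(t^*)).
\end{equation*}
Here the $j\neq i$ state increments meet nonnegative partials $\partial f_i/\partial x_j$, the $i$-th state increment is zero, and the input increments meet nonnegative partials $\partial f_i/\partial w_j$.
\item Therefore $\dot\eta_{\varepsilon,i}-\dot\xi_i\ge\varepsilon>0$ at $t^*$. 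But the difference $\eta_{\varepsilon,i}-\xi_i$ is positive before $t^*$ and zero at $t^*$, which forces its derivative at $t^*$ to be $\le0$. This contradiction proves the claim.
\item Let $\varepsilon\to0$. Continuous dependence on initial data and on additive perturbations of the vector field gives $\eta_\varepsilon\to\phi(t,y,\mathbf{w_2})$ uniformly on compact intervals. Hence $\phi(t,x,\mathbf{w_1})\le\phi(t,y,\mathbf{w_2})$.
\end{enumerate}

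\emph{Necessity.} Assume the system is monotone.
\begin{enumerate}
\item Suppose, for contradiction, that $\partial f_i/\partial x_j(\bar x,\bar w)<0$ for some $i\ne j$.
\item Take $x=\bar x$ and $y=\bar x+he_j$ with small $h>0$, which is admissible when $y\in\mathcal X$; the boundary case is handled by reversing roles or using interior points and continuity. Use the same constant input $\bar w$ for both. Then $x_i=y_i$.
\item A Taylor expansion in $t$ gives
\begin{equation*}
\phi_i(t,y)-\phi_i(t,x)=t\big(f_i(y,\bar w)-f_i(x,\bar w)\big)+o(t).
\end{equation*}
For small $h$, the bracket is approximately $h\,\partial f_i/\partial x_j<0$.
\item So for small $t$ the $i$-th component ordering is violated, contradicting monotonicity.
\item The input condition is handled the same way. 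Take $x=y=\bar x$ and constant inputs $\bar w\le\bar w+he_j$, and expand to first order in $t$.
\end{enumerate}

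\emph{Main obstacle.} The delicate step is the first-touching-time argument in sufficiency. Two points need care there. First, the perturbed comparison must be strict, so that a first contact time is well defined and the derivative contradiction goes through. Second, order-convexity is used exactly so that the mean value theorem is applied only along segments that remain inside $\mathcal X\times\mathcal W$. Boundary points in the necessity part are a secondary nuisance, resolved by density of interior points and continuity of the partial derivatives.
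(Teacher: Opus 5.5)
The paper gives no proof of this statement; it is quoted from Angeli--Sontag, so your argument has to stand on its own. Your structure is the classical Kamke--M\"uller route: a strict perturbation with a first-contact time for sufficiency, and a first-order expansion in $t$ for necessity. However, step 3 of the sufficiency part has a genuine gap.

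Order-convexity puts the segment from $\xi(t^*)$ to $\eta_\varepsilon(t^*)$ inside $\mathcal X$ only if $\eta_\varepsilon(t^*)\in\mathcal X$, and your construction never ensures this. Already $\eta_\varepsilon(0)=y+\varepsilon\mathbf{1}$ lies outside $\mathcal X$ when, say, $\mathcal X=[0,1]^n$ and $y$ is on an upper face. Moreover, the perturbed field $f+\varepsilon\mathbf{1}$ has no reason to keep $\mathcal X$ invariant. Off $\mathcal X$ the sign conditions are not assumed, and $f$ need not even be defined there. So the inequality $f_i(\xi(t^*),\mathbf{w_1}(t^*))\le f_i(\eta_\varepsilon(t^*),\mathbf{w_2}(t^*))$ is unjustified. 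Continuity of the partials only bounds the defect by a modulus of continuity, which need not be dominated by the $\varepsilon$ you gained. Your closing paragraph states exactly this requirement, but the construction violates it.

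The argument does go through in two cases:
\begin{itemize}
\item If $\mathcal X$ is upward closed, then $y+\varepsilon\mathbf{1}\in\mathcal X$ and $\eta_\varepsilon(t)\ge\xi(t)\in\mathcal X$ on $[0,t^*]$. This covers the only use the paper makes of the result, namely $\mathcal X=\R^3_{\geq 0}$ and $\mathcal W=\R_{\geq 0}$.
\item If $\mathcal X$ is open, run the contact argument only on a horizon $[0,T]$, choosing $\varepsilon$ small enough (via continuous dependence) that $\eta_\varepsilon([0,T])\subset\mathcal X$.
\end{itemize}
For a general order-convex $\mathcal X$, e.g.\ closed and bounded, you need a device that never evaluates $f$ outside $\mathcal X$. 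For a box, for instance, composing $f$ with the order-preserving coordinatewise clamp onto $\mathcal X$ makes the contact inequality hold globally.

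There are two further points.

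First, in the necessity part, ``reversing roles or using interior points and continuity'' silently assumes that $\mathcal X$ and $\mathcal W$ lie in the closure of their interiors. This is not among the stated hypotheses, and without it the ``only if'' direction is false. For $\mathcal W=\{0\}$ and $f(x,w)=-w$, the system is $\dot x=0$, hence monotone, yet $\partial f/\partial w=-1$. Likewise, take $\mathcal X=\R\times\{0\}$ with $f(x)=(-x_2,0)$. You should state the density assumption explicitly; it holds in the paper's setting.

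Second, step 4 uses $\dot\xi_i$ and $\dot\eta_{\varepsilon,i}$ at the single instant $t^*$. This presumes the inputs are continuous at $t^*$, whereas the definition allows arbitrary input signals, for which solutions are merely absolutely continuous. The repair is easy. On $[0,t^*]$ the two states are ordered, so the same mean-value computation gives $\dot u\ge\varepsilon-Lu$ almost everywhere for $u=\eta_{\varepsilon,i}-\xi_i$, where $L$ is a local bound on $|\partial f_i/\partial x_i|$. Hence $e^{Lt^*}u(t^*)\ge u(0)=\varepsilon>0$, contradicting $u(t^*)=0$.
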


We say a set $S \subseteq \mathcal X$ is forward invariant under some dynamics $\dot x = f(x)$ if
$x_0 \in S $ implies that $\phi(t,x_0) \in S$ for all $ t \geq 0$.
We also recall that if $S$ can be represented as
\mbox{$S = \{x \in \mathcal X : h(x) \geq 0 \}$}, where $h : \mathcal X \to \R$ is a continuously differentiable function such that $\nabla h(x) \neq 0$ when $h(x)=0$, then $S$ is forward invariant if $\dot h (x) = \nabla h (x) f(x) \geq 0$ for all $x \in \partial S$ \cite{ames2016control}. Here $\partial S$ denotes the boundary of $S$, i.e., the set $\{ x \in \mathcal X : h(x) = 0\}$.

\subsection{The UIV model}

We focus our analysis on the UIV model, also called the Target Cell Limited (TCL) model, for acute infections \cite{ciupe2017host, hernandez2020host} which consists of three in-host states : $U$ - Uninfected Cells, $I$ - Infected Cells, and $V$ - Viral Load. 
The dynamics of this model are given by:
\begin{equation}
    \begin{pmatrix}
        \dot U \\ \dot I \\ \dot V
    \end{pmatrix} =
    \begin{pmatrix}
        - \beta U V \\
    \beta U V - \delta I \\
    p I - c V
    \end{pmatrix},
    \label{eq:uiv}
\end{equation}
where $\beta, \delta, p, c \in \R_{>0}$ are model parameters.
The parameter $\beta$ represents the infection rate of healthy cells, $\delta$ the death rate of infected cells, $p$ the replication rate of the virus, and $c$ the clearance rate of the virus (the combined effect of both the immune system and natural death of viral particles) \cite{hernandez2020host}.
The UIV model belongs to a broader class of within-host viral dynamics models which aim to capture the mechanisms driving infections on a microscopic scale \cite{ciupe2017host}.

\begin{remark}
    Under the parameter restriction $\delta=p$ and the state mapping $(U,I,V)\mapsto(S,E,I)$, the dynamical system \eqref{eq:uiv} coincides with the normalized SEIR model \cite{cooke1967seir_orig, li1995seir_global}, with the additional (redundant) state \mbox{$R=1-(S+E+I)$}. Thus, under these conditions, all results derived in this paper are also applicable to the SEIR model through the corresponding renaming of variables.
\end{remark}

\begin{remark}
        The positive orthant $\R^3_{\geq 0}$
        is forward invariant under the dynamics \eqref{eq:uiv} \cite{hernandez2022modelling_epidemics}. Since this set contains all possible states of interest, we restrict our analysis to this invariant domain.
        \label{rem:fwd_invariant}
\end{remark}

The state $U$ represents the ``target cells'' for the specific infection under study, for instance in the case of COVID-19, this corresponds to respiratory epithelial cells \cite{hernandez2020host}.
An uninfected host has a state $(U_0, 0, 0)$, where $U_0 \in \R_{>0}$ is the nominal number of cells, which is an equilibrium of \eqref{eq:uiv}. 
When the initial infection occurs the host's state becomes $(U_0, 0, V_0)$, where $V_0 \in \R_{>0}$ denotes the initial viral load \cite{hernandez2022modelling_epidemics}.
However, for the remainder of our analysis, we shall assume a more general initial condition $(U_0, I_0, V_0)$, allowing us to analyse trajectories of the system starting from a time later than the time of initial infection.

Similar to \cite{hernandez2020host}, we also define the quantity:
\begin{equation}
    R_0 = \frac{p \beta U_0}{\delta c},
    \label{eq:R0}
\end{equation}
commonly referred to as the ``reproduction number'' in the epidemiology literature.
It is well known that for $R_0 \leq 1$, the viral load $V$ does not grow, but instead decays (i.e., the uninfected equilibrium is stable) \cite{korobeinikov2004global}. Thus for the remainder of the paper we only consider the case where $R_0>1$, as the other case is not of interest.

\begin{remark}
    The states $I$ and $V$ evolve in such a manner that they each have a single peak at distinct times, and converge to $\lim_{t \to \infty} I(t) = \lim_{t \to \infty} V(t) =0$ \cite{hernandez2022modelling_epidemics}.
    This convergence to zero represents the resolution of the infection, which, depending on other biological factors, corresponds to either recovery from the infection or, unfortunately, death.
\end{remark}

\begin{remark}
    The state $U(t)$ is monotonically decreasing with $t$ and converges to \mbox{$\lim_{t \to \infty} U(t) = U_\infty > 0$} given by:
    \begin{equation*}
        U_\infty = -\frac{U_0}{R_0} W_0 \left( -R_0 e^{-\frac{R_0}{U_0} (U_0 + I_0 + \frac{\delta}{p} V_0)} \right),
    \end{equation*}
    where $W_0 : \R \to \R$ is the principal branch of the Lambert W function \cite{hernandez2022modelling_epidemics}.
\end{remark}

In the field of epidemiology, a host is typically deemed infectious, i.e., the host has the ability to infect other hosts, when the viral load $V$ is above a certain threshold whose specific value depends on the disease under study \cite{zhou2023viral_thresh, van2021duration_thresh}. The period of time between initial infection and the viral load reaching this threshold is typically referred to as the ``incubation stage''.

\section{PROBLEM FORMULATION} \label{sct:problem_formulation}

While the states $U$ and $I$ are typically not observable, modern testing methods, such as qPCR \cite{artika2022rtpcr} and ddPCR \cite{suo2020ddpcr} in the case of COVID-19, allow us to obtain an estimate of the state $V$ at the time of testing.
In practice, the results of such a test will not be exact, and only a range of possible viral loads $[\underline{V}_s,\overline{V}_s]$ may be inferred.
In this paper, we utilize the model \eqref{eq:uiv} to address the following problems based on the outcome of such a test:

\begin{problem}
Given a single uncertain measurement of the viral load prior to its peak, determine an upper bound for the peak viral load attained during the course of the infection.
\label{prob:peak}
\end{problem}

\begin{problem}
Given a single uncertain measurement of the viral load obtained during the ``incubation stage,'' determine a lower bound for the time required for the patient to become infectious.
\label{prob:time}
\end{problem}

The ability to estimate these critical quantities from a single test enables healthcare professionals to tailor care to individual patient needs. Specifically, bounding the peak viral load allows clinicians to anticipate disease severity and proactively explore appropriate therapeutic interventions. Furthermore, previous research \cite{abuin2021antiviral} demonstrates that the efficacy of certain medications can be incorporated into the model as parameter adjustments. By re-evaluating our proposed bounds under these modified parameters, clinicians can better predict a patient's response to various treatment regimens.

Addressing Problem \ref{prob:time} holds equal practical importance, as it essentially determines the earliest time at which a host becomes infectious. 
Determining this lower bound for the viral incubation period facilitates more informed public health decisions, and such information is critical in resource-constrained scenarios where isolation capacity must be optimized.

When proposing our solutions to these problems, we assume that we have full knowledge of the model parameters $\beta, \delta, p$ and $c$. In practice, accurate estimates of these parameters can be obtained using past clinical data \cite{alamo2021data, hernandez2019modelingmatlab}. Furthermore, in scenarios where only a range of possible values is available, the final expressions for the derived bounds can be appropriately substituted with the extreme values of the parameters to obtain worst case guarantees.

\section{MONOTONICITY} \label{sct:monotonicity}

\subsection{Change of Variables}

We first note that it is possible to solve the first equation of \eqref{eq:uiv}, to obtain:
\begin{equation*}
    U(t) = U_0 e^{-\beta Z(t)},
\end{equation*}
where:
\begin{equation*}
    Z(t) = \int_0^t V(\tau) d \tau.
\end{equation*}
This motivates us to define the following change of variables:
\begin{equation}
    (Z, I, V) = \left(\frac{-1}{\beta}\ln\left(\frac{U}{U_0}\right),I, V\right),
    \label{eq:change_of_var}
\end{equation}
which is a diffeomorphism within 
our domain of interest;
\mbox{$\{ (U,I,V) : U_0 \geq U \geq U_\infty > 0, I \geq 0, V \geq 0\}$}.
The dynamics \eqref{eq:uiv}, can now be rewritten as:
\begin{equation}
    \begin{pmatrix}
        \dot Z \\ \dot I \\ \dot V
    \end{pmatrix} =
    \begin{pmatrix}
        V \\
    \beta U_0 V e^{-\beta Z} - \delta I \\
    p I - c V
    \end{pmatrix},
    \label{eq:ziv}
\end{equation}
with initial condition $(0, I_0, V_0)$. 

In practice, the exact value of $U_0$ is typically unknown. Instead, we assume that $U_0$ lies within a known interval of nominal values, $[\underline{U}_0, \overline{U}_0]$. Consequently, the final expressions for the bounds are given in terms of this uncertainty range.

\begin{remark}
    The positive orthant $\R^3_{\geq 0}$
    is forward invariant under the dynamics \eqref{eq:ziv}. Additionally, $Z(t)$ is strictly monotonically increasing with time, since $\dot Z = V > 0$ for all $t \in \R_{\geq0}$.
    \label{rem:fwd_invariant_ziv}
\end{remark}

\begin{prop}
The quantity:
\begin{equation}
    K(t) = U_0 e^{-\beta Z(t)} + \frac{\beta U_0}{R_0}Z(t) + I(t) + \frac{\delta}{p}V(t),
    \label{eq:const_ziv}
\end{equation}
is invariant under the dynamics \eqref{eq:ziv}.
\end{prop}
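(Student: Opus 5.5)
The plan is a direct computation: differentiate $K$ along solutions of \eqref{eq:ziv} and show that $\dot K \equiv 0$. Then $K(t)=K(0)$ for all $t\geq 0$ on the forward-invariant domain of Remark \ref{rem:fwd_invariant_ziv}. Since $K$ is continuously differentiable in $(Z,I,V)$, the chain rule gives
\begin{equation*}
\dot K = -\beta U_0 e^{-\beta Z}\dot Z + \frac{\beta U_0}{R_0}\dot Z + \dot I + \frac{\delta}{p}\dot V .
\end{equation*}

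Next I substitute the right-hand sides of \eqref{eq:ziv} and group like terms.

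\textbf{Exponential term.} The term $-\beta U_0 e^{-\beta Z} V$ coming from $\dot Z = V$ cancels exactly with the term $\beta U_0 V e^{-\beta Z}$ in $\dot I$. This cancellation is what the change of variables \eqref{eq:change_of_var} was designed to produce: it mirrors the conservation of $U+I$ up to the loss term in the original model.

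\textbf{$I$ terms.} The term $-\delta I$ in $\dot I$ cancels with $\frac{\delta}{p}\, pI$ from $\dot V$.

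\textbf{$V$ terms.} What remains is $\frac{\beta U_0}{R_0}V - \frac{\delta c}{p} V$. Using the definition \eqref{eq:R0}, $\frac{\beta U_0}{R_0} = \frac{\delta c}{p}$, so this also vanishes.

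Hence $\dot K = 0$ identically, and $K$ is constant along trajectories, with value
\begin{equation*}
K(0)=U_0 + I_0 + \tfrac{\delta}{p}V_0,
\end{equation*}
since $Z(0)=0$.

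There is no real obstacle here. The only step needing care is to recognise that the coefficient $\frac{\beta U_0}{R_0}$ was chosen precisely to absorb the clearance term $-\frac{\delta c}{p}V$ through the definition of $R_0$. I would state that identity explicitly so the cancellation is transparent.
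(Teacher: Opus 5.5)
Your proposal is correct and is the same direct differentiation the paper uses; the paper only asserts that $\frac{dK}{dt}=0$, and you carry it out. Your three cancellations (the exponential terms, the $\delta I$ terms, and the $V$ terms via $\frac{\beta U_0}{R_0}=\frac{\delta c}{p}$) all check out, and your value $K=U_0+I_0+\frac{\delta}{p}V_0$ matches \eqref{eq:const_val}.
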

\begin{proof}
    Direct computation shows that $\frac{dK}{dt}=0$.
\end{proof}

Since the quantity \eqref{eq:const_ziv} is a constant of motion, for the remainder of this paper, we omit the dependence on time and simply refer to it as $K$. The value of this constant can be expressed in terms of the initial conditions as:
\begin{equation}
    K = U_0 + I_0 + \frac{\delta}{p}V_0.
    \label{eq:const_val}
\end{equation}

Our end goal is to derive bounds for the quantities of interest associated with the solutions of \eqref{eq:uiv}. 
We achieve this by leveraging the theory of monotone systems. To this end, we first define the following control system:
\begin{equation}
    \begin{pmatrix}
        \dot {\t Z} \\ \dot {\t I} \\ \dot {\t V}
    \end{pmatrix} =
    \begin{pmatrix}
        \t V \\
    \beta U_0 \t V w - \delta \t I \\
    p \t I - c \t V
    \end{pmatrix},
    \label{eq:ziv_w}
\end{equation}
where $w \in \R_{\geq 0}$ is a control input. 

\begin{prop}
    System \eqref{eq:ziv_w} is monotone for all $w\geq 0$.
    \label{prop:ziv_monotone}
\end{prop}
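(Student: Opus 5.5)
We will verify the two sign conditions of Theorem~\ref{thm:monotone_cond} directly for the vector field of \eqref{eq:ziv_w}. Write it as $f(\t x,w)$ with $\t x=(\t Z,\t I,\t V)$ and
\begin{equation*}
f_1=\t V,\qquad f_2=\beta U_0 \t V w-\delta \t I,\qquad f_3=p\t I-c\t V .
\end{equation*}
We take the state space to be $\mathcal X=\R^3_{\geq 0}$, which is the domain of interest by Remark~\ref{rem:fwd_invariant_ziv}, and the input space to be $\mathcal W=\R_{\geq 0}$. Both sets are convex, and hence order-convex. The map $f$ is polynomial in $(\t x,w)$, so it is continuously differentiable. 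The hypotheses of Theorem~\ref{thm:monotone_cond} are therefore satisfied.

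Next we compute the off-diagonal Jacobian entries:
\begin{align*}
&\frac{\partial f_1}{\partial \t I}=0,\quad \frac{\partial f_1}{\partial \t V}=1,\quad \frac{\partial f_2}{\partial \t Z}=0,\\
&\frac{\partial f_2}{\partial \t V}=\beta U_0 w,\quad \frac{\partial f_3}{\partial \t Z}=0,\quad \frac{\partial f_3}{\partial \t I}=p .
\end{align*}
All of these are nonnegative because $\beta,U_0,p>0$ and $w\geq 0$. The derivatives with respect to the input are
\begin{equation*}
\frac{\partial f_1}{\partial w}=0,\qquad \frac{\partial f_2}{\partial w}=\beta U_0\t V,\qquad \frac{\partial f_3}{\partial w}=0 .
\end{equation*}
These are nonnegative precisely because $\t V\geq 0$ on $\mathcal X$.

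Applying Theorem~\ref{thm:monotone_cond} then gives monotonicity of \eqref{eq:ziv_w} with respect to inputs valued in $\R_{\geq0}$. The diagonal entries, such as $-\delta$ and $-c$, play no role in the criterion.

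The only step that needs care is the condition on $\partial f_2/\partial w$. It relies on restricting the state to the nonnegative orthant, which is why we must take $\mathcal X=\R^3_{\geq0}$ and not all of $\R^3$. To justify this restriction we must also check that $\R^3_{\geq0}$ is forward invariant for \eqref{eq:ziv_w} under any input $w\geq 0$:
\begin{itemize}
\item on the face $\t V=0$ we have $\dot{\t V}=p\t I\geq 0$;
\item on the face $\t I=0$ we have $\dot{\t I}=\beta U_0\t V w\geq0$;
\item on the face $\t Z=0$ we have $\dot{\t Z}=\t V\geq 0$.
\end{itemize}
By the boundary criterion recalled in Section~\ref{sct:background}, applied face by face, this establishes invariance.
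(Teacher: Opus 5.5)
Your proof is correct and follows the paper's argument: compute the off-diagonal entries of $\nabla_x f$ and the input derivative $\partial f_2/\partial w=\beta U_0\t V$, note that they are nonnegative for $w\ge 0$, and invoke Theorem~\ref{thm:monotone_cond}. Your explicit choice $\mathcal X=\R^3_{\geq0}$ and your orthant-invariance check make precise the requirement $\t V\ge 0$, which the paper uses only tacitly. One caveat: your face-by-face check is really the tangency condition for the orthant as an intersection of half-spaces, not the single-set criterion of Section~\ref{sct:background} applied to each half-space separately (for $w>0$ the set $\{\t I\ge 0\}$ alone is not invariant), but the conclusion is right.
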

\begin{proof}
Letting $x = (\t Z,\t I,\t V)$, we write the system in the form $\dot x = f(x,w)$, and compute:
\begin{equation*}
    \nabla_x f = 
    \begin{pmatrix}
        0 & 0 & 1 \\
        0 & -\delta & \beta U_0 w \\
        0 & p & -c\\
    \end{pmatrix} 
    \quad \text{and} \quad \nabla_w f = 
    \begin{pmatrix}
        0 \\ \beta U_0 \t V \\ 0
    \end{pmatrix}.
\end{equation*}
Given that $w(t) \geq 0$ for all $t\geq0$, we have that $\frac{\partial f_i}{\partial x_j} \geq 0$ for all $i,j = 1,2,3$ with $i \neq j$, and $\frac{\partial f_i}{\partial w} \geq 0$ for all $i=1,2,3$. 
Thus by Theorem \ref{thm:monotone_cond}, \eqref{eq:ziv_w} is monotone for any $w(t) \geq 0$.
\end{proof}

For the remainder of our analysis we fix $w(t)$ to be a constant parameter for all $t$, i.e., $w(t)=w \in \R_{>0}$, reducing (\ref{eq:ziv_w}) to a linear dynamical system:
\begin{equation}
    \begin{pmatrix}
        \dot {\t Z} \\ \dot {\t I} \\ \dot {\t V}
    \end{pmatrix} = 
    \underbrace{
    \begin{pmatrix}
        0 & 0 & 1 \\
        0 & -\delta & \beta U_0w\\
        0 & p & -c\\
    \end{pmatrix}
    }_{A(w)}
    \begin{pmatrix}
        \t Z \\ \t I \\ \t V
    \end{pmatrix}.
    \label{eq:lin_ziv_w}
\end{equation}

\begin{theorem}
    Given an initial condition $x_0 = (Z_0, I_0, V_0)$ at time $t_0 \in \R_{\geq0}$, 
    it holds that $\phi(t, x_0) \leq \t \phi(t, x_0)$ for all \mbox{$t \geq t_0$}, where
    $\t \phi(t, x_0)$ denotes the solutions of \eqref{eq:ziv}, and
    $\phi(t,x_0)$ denotes the solutions of \eqref{eq:lin_ziv_w} with $w = e^{-\beta Z_0}$. 
    \label{thm:ziv_bound}
\end{theorem}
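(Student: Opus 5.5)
The plan is to obtain the inequality $\phi(t,x_0) \leq \t\phi(t,x_0)$ as a comparison between two trajectories of the \emph{same} monotone control system \eqref{eq:ziv_w}, driven by two ordered inputs from the same initial state. Proposition \ref{prop:ziv_monotone} then gives the ordering directly. No explicit solution of either system is needed.

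\textbf{Step 1 (both trajectories as solutions of \eqref{eq:ziv_w}).} Fix $x_0=(Z_0,I_0,V_0)\in\R^3_{\geq0}$ at time $t_0$, and set $\bar w = e^{-\beta Z_0}$. The solution of \eqref{eq:lin_ziv_w} with $w=\bar w$ is, by construction, the solution of \eqref{eq:ziv_w} under the constant input $\mathbf{w}_1(t)\equiv \bar w$. For the solution of \eqref{eq:ziv}, let $Z(t)$ be its first component and define the input $\mathbf{w}_2(t)=e^{-\beta Z(t)}$. Substituting $w=\mathbf{w}_2(t)$ into \eqref{eq:ziv_w} reproduces \eqref{eq:ziv} exactly. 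By uniqueness of solutions (the right-hand sides are $C^1$), the solution of \eqref{eq:ziv} is therefore the solution of \eqref{eq:ziv_w} from $x_0$ under $\mathbf{w}_2$. Both trajectories thus start at the same $x_0$ at $t_0$, so it only remains to order the inputs.

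\textbf{Step 2 (ordering the inputs).} By Remark \ref{rem:fwd_invariant_ziv}, $\R^3_{\geq0}$ is forward invariant and $\dot Z = V \geq 0$. Hence $Z(t)\geq Z_0$ for all $t\geq t_0$, and so $\mathbf{w}_2(t)=e^{-\beta Z(t)} \leq e^{-\beta Z_0}=\bar w$. Both inputs also lie in $\R_{\geq0}$, which is order-convex, and the state space $\R^3_{\geq0}$ is order-convex too. The hypotheses of Theorem \ref{thm:monotone_cond} are therefore met, as already verified in Proposition \ref{prop:ziv_monotone}.

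\textbf{Step 3 (apply monotonicity).} Apply Definition \ref{def:monotone_sys}, shifted to initial time $t_0$ (the system is time-invariant), with $x=y=x_0$ and the pointwise-ordered inputs from Step 2. This yields the componentwise comparison between the input-$\mathbf{w}_2$ trajectory and the input-$\mathbf{w}_1$ trajectory. With the labelling of the statement, $\t\phi$ is the trajectory of the tilde (comparison) system \eqref{eq:ziv_w}/\eqref{eq:lin_ziv_w} with frozen input $\bar w$, and $\phi$ is the trajectory with the actual state-dependent input. This is precisely $\phi(t,x_0)\leq\t\phi(t,x_0)$ for all $t\geq t_0$.

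\textbf{Main obstacle.} The delicate point is Step 1: justifying that a state-dependent input $\mathbf{w}_2(t)=e^{-\beta Z(t)}$, built from the solution itself, is admissible in Definition \ref{def:monotone_sys}. I would handle this by first computing the solution of \eqref{eq:ziv} on $[t_0,\infty)$, which exists globally by forward invariance and boundedness via the constant $K$. Then $\mathbf{w}_2$ is a fixed, continuous, exogenous signal, and uniqueness identifies the two solutions, so monotonicity applies to a genuine open-loop input. A secondary check is that Theorem \ref{thm:monotone_cond} is used on $\R^3_{\geq0}$, where the off-diagonal entry $\beta U_0 w$ and $\partial f/\partial w=\beta U_0\t V$ are nonnegative; this holds because $\t V\geq 0$ there.
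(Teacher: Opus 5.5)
Your proposal is correct and matches the paper's argument: use $Z(t)\geq Z_0$ to get $e^{-\beta Z(t)}\leq e^{-\beta Z_0}$, note that \eqref{eq:ziv_w} driven by $e^{-\beta Z(t)}$ is exactly \eqref{eq:ziv}, and apply Proposition~\ref{prop:ziv_monotone}; your treatment of the state-dependent input as a fixed open-loop signal and your restriction to $\R^3_{\geq0}$ are welcome additions that the paper leaves implicit. One caveat: the statement as printed assigns $\t\phi$ to \eqref{eq:ziv} and $\phi$ to \eqref{eq:lin_ziv_w}, which is the opposite of what you call ``the labelling of the statement'' in Step~3, so you should say explicitly that you are correcting a label swap---the inequality you prove (nonlinear below linear) is the one the paper's proof establishes and uses in \eqref{eq:upper_bound}, whereas the literal reading is false (from $x_0=(0,0,V_0)$ with $V_0>0$, the two $I$-components agree to first order at $t_0$, but the nonlinear one has second derivative smaller by $\beta^2U_0V_0^2$).
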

\begin{proof}
    Since $Z(t)$ is monotonically increasing, it follows that $e^{-\beta Z(t)} \leq e^{-\beta Z_0}$ for all $t \geq t_0$.
    From monotonicity of the system \eqref{eq:ziv_w}, established in Proposition \ref{prop:ziv_monotone} and Definition \ref{def:monotone_sys}, it follows that the solutions of \eqref{eq:ziv_w} with $w=e^{-\beta Z_0}$ upper bound those of \eqref{eq:ziv_w} with $w = e^{-\beta Z(t)}$ for all $t \geq t_0$.
    But \eqref{eq:ziv_w} with $w=e^{-\beta Z(t)}$ is simply the system \eqref{eq:ziv}, and the result follows.
\end{proof}
The preceding theorem allows us to analyze the behaviour of a linear system initialized at a point on the trajectory of our original nonlinear system, to obtain upper bounds on the state variables. We shall informally refer to this linear system as the linear upper bounding system.

\begin{prop}
    The quantity:
    \begin{equation}
        \t K(w, t) = \beta U_0 \left(\frac{1}{R_0} - w \right)\t Z (t) + \t I (t) + \frac{\delta}{p} \t V (t),
        \label{eq:const_lin}
    \end{equation}
    parameterized by $w$ is invariant under the dynamics (\ref{eq:lin_ziv_w}).
\end{prop}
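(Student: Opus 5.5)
The plan is a direct differentiation of $\t K(w,t)$ along solutions of \eqref{eq:lin_ziv_w}, treating $w$ as a fixed constant, in the same way the invariance of $K$ in \eqref{eq:const_ziv} was checked for \eqref{eq:ziv}. Since $\t K$ is linear in the state, its time derivative is the row vector of coefficients
\begin{equation*}
c(w) = \left(\beta U_0\left(\tfrac{1}{R_0} - w\right),\ 1,\ \tfrac{\delta}{p}\right)
\end{equation*}
multiplied by $A(w)x$. Invariance for every trajectory is therefore equivalent to $c(w)A(w) = 0$, i.e. $c(w)$ is a left null vector of $A(w)$. I would verify this column by column.

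\emph{First column.} It is identically zero, so there is no condition.

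\emph{Second column ($\t I$-coefficient).} The contribution is $-\delta\cdot 1 + \tfrac{\delta}{p}\cdot p = 0$.

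\emph{Third column ($\t V$-coefficient).} The contribution is
\begin{equation*}
\beta U_0\left(\tfrac{1}{R_0} - w\right) + \beta U_0 w - \tfrac{\delta}{p}c .
\end{equation*}
The $w$-terms cancel, leaving $\beta U_0/R_0 - \delta c/p$. Substituting the definition \eqref{eq:R0} gives $\beta U_0/R_0 = \delta c/p$, so this entry vanishes as well.

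Hence $\tfrac{d}{dt}\t K(w,t) = 0$ for every fixed $w$, and $\t K$ is constant along solutions. There is no real obstacle. The only point needing care is the third column, where the definition of $R_0$ must be substituted exactly so that the $w$-dependent terms cancel. As a consistency check, $\t K$ is the linearization of $K$ in \eqref{eq:const_ziv}: replacing $e^{-\beta Z}$ by the constant $w$ in the $\dot I$ equation corresponds to replacing $U_0 e^{-\beta Z}$ in $K$ by $-\beta U_0 w Z$.
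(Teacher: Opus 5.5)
Your proposal is correct and is the same approach as the paper. The paper's proof just says that direct computation gives $\frac{d\t K}{dt}=0$, and your check that the coefficient vector is a left null vector of $A(w)$ is that computation written out. The $w$-terms cancel in the third column, and $\beta U_0/R_0=\delta c/p$ removes what is left.
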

\begin{proof}
    Direct computation shows that $\frac{d \t K}{dt}=0$.
\end{proof}

As before, since \eqref{eq:const_lin} is a constant of motion, we omit the dependance on time and simply denote it by $\t K(w)$. Additionally, if the initial condition of \eqref{eq:lin_ziv_w} lies on the solution trajectory of \eqref{eq:ziv}, the constants of motion $K$ and $\t K (w)$ are related by the following proposition.

\begin{prop}
    Let $\t \phi$ be a solution of \eqref{eq:lin_ziv_w} and $\phi$ be a solution of \eqref{eq:ziv}. If there exists $t_1, t_2 \in \R_{\geq 0}$ such that \mbox{$\phi (t_1) = \t \phi (t_2)$} then,
    the constant of motion $K$ is related to $\t K(w)$ by:
    \begin{equation}
        \t K(w) = K - w \beta U_0 Z(t_1) - U_0 e^{-\beta Z(t_1)},
        \label{eq:const_rel}
    \end{equation}
    where $(Z(t_1), I(t_1), V(t_1)) = \phi (t_1) = \t \phi (t_2).$
    \label{prop:const_relation}
\end{prop}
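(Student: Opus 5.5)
Both $K$ and $\t K(w)$ are constants of motion, so each can be evaluated at any single point of its own trajectory. I will evaluate both at the common point $(Z(t_1), I(t_1), V(t_1)) = \phi(t_1) = \t\phi(t_2)$ and then eliminate the terms that appear in both expressions.

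First, since $\phi$ solves \eqref{eq:ziv} and $K$ is invariant, \eqref{eq:const_ziv} evaluated at $t = t_1$ gives
\begin{equation*}
K = U_0 e^{-\beta Z(t_1)} + \frac{\beta U_0}{R_0} Z(t_1) + I(t_1) + \frac{\delta}{p} V(t_1).
\end{equation*}
Second, since $\t\phi$ solves \eqref{eq:lin_ziv_w} and $\t K(w)$ is invariant, \eqref{eq:const_lin} evaluated at $t = t_2$ gives
\begin{equation*}
\t K(w) = \frac{\beta U_0}{R_0} Z(t_1) - w\beta U_0 Z(t_1) + I(t_1) + \frac{\delta}{p} V(t_1).
\end{equation*}
Here I have used $\t Z(t_2) = Z(t_1)$, $\t I(t_2) = I(t_1)$ and $\t V(t_2) = V(t_1)$.

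Subtracting the second identity from the first cancels the common terms $\frac{\beta U_0}{R_0} Z(t_1)$, $I(t_1)$ and $\frac{\delta}{p} V(t_1)$. What remains is
\begin{equation*}
K - \t K(w) = U_0 e^{-\beta Z(t_1)} + w\beta U_0 Z(t_1).
\end{equation*}
Rearranging gives exactly \eqref{eq:const_rel}.

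There is essentially no hard step. The one point to watch is that $K$ is evaluated at time $t_1$ on the nonlinear trajectory while $\t K(w)$ is evaluated at time $t_2$ on the linear trajectory. This mismatch causes no difficulty, because both quantities are time-invariant along their respective dynamics. I also note that the identity holds for any $w$, since $w$ enters only through the definition of $\t K(w)$; in particular it holds for the choice $w = e^{-\beta Z(t_1)}$ used in Theorem~\ref{thm:ziv_bound}.
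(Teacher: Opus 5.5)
Your proof is correct and is the same argument as the paper's: evaluate both constants of motion at the shared point $\phi(t_1)=\t\phi(t_2)$ and subtract, so that the common terms $\frac{\beta U_0}{R_0}Z(t_1)$, $I(t_1)$ and $\frac{\delta}{p}V(t_1)$ cancel. Your remark that the different evaluation times $t_1$ and $t_2$ cause no difficulty, because each quantity is invariant along its own dynamics, is exactly the justification the paper's one-line proof leaves implicit.
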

\begin{proof} Since $(Z(t_1), I(t_1), V(t_1))$ lies on the trajectory of both systems, subtracting (\ref{eq:const_ziv}) from (\ref{eq:const_lin}), both evaluated at $(Z(t_1), I(t_1), V(t_1))$ yields the relationship.
\end{proof}

The system matrix $A(w)$ of \eqref{eq:lin_ziv_w} has eigenvalues $0, \lambda_-(w)$ and $\lambda_+(w)$, where:
\begin{equation}
    \lambda_\pm(w) = \frac{-1}{2}\left(\delta + c \pm \sqrt{(\delta+c)^2 + 4 \delta c(R_0 w - 1)} \right).
    \label{eq:eigval}
\end{equation}

In the remainder of the paper, for the sake of brevity, we simply denote by $\lambda_-$ and $ \lambda_+$ the eigenvalues corresponding to the case when $w=1$.

\section{MAIN RESULTS} \label{sct:main_results}

\subsection{Bounding Internal States}

Since we aim to provide estimates based solely on a single sample of $V$, we first derive a bound on the remaining state variables in terms of this observable variable. 

\begin{prop}
    The set:
    \begin{equation*}
        \mathcal T = \{ (Z, I, V): Z \geq 0, I \geq 0, \, \beta U_0 V \geq (\lambda_+ + \delta) I \},
    \end{equation*}
    is forward invariant under the dynamics \eqref{eq:ziv}.
\end{prop}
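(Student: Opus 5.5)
The plan is to treat $\mathcal T$ as the intersection of three sublevel-type sets and verify the forward-invariance condition recalled in Section~\ref{sct:background} on each boundary piece. The key piece is the barrier function
\begin{equation*}
  h(Z,I,V) = \beta U_0 V - (\lambda_+ + \delta) I ,
\end{equation*}
whose gradient $(0, -(\lambda_++\delta), \beta U_0)$ never vanishes. The other two pieces are $Z \geq 0$ and $I \geq 0$. For those I would rely on Remark~\ref{rem:fwd_invariant_ziv}, together with the observation that on $\{I=0\}\cap\mathcal T$ we have $h\geq 0$, hence $V \geq 0$, hence $\dot I = \beta U_0 V e^{-\beta Z} \geq 0$. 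Since $\dot Z = V$ and $Z$ starts at $0$, the constraint $Z\geq 0$ is preserved along all trajectories of interest, which start at $(0,I_0,V_0)\in\R^3_{\geq 0}$.

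The main step is computing $\dot h$ along \eqref{eq:ziv} on the boundary $\{h=0\}$. Writing $m = \lambda_+ + \delta$, we have
\begin{equation*}
  \dot h = \beta U_0 (pI - cV) - m\left(\beta U_0 V e^{-\beta Z} - \delta I\right).
\end{equation*}
On $h=0$ we can substitute $\beta U_0 V = m I$, which gives
\begin{equation*}
  \dot h = I\left(\beta U_0 p - c m + \delta m - m^2 e^{-\beta Z}\right).
\end{equation*}
Next I would use that $\lambda_+$ is an eigenvalue of $A(1)$. The characteristic polynomial of the lower $2\times 2$ block of $A(1)$ is $(\lambda+\delta)(\lambda+c) - p\beta U_0 = 0$, and rewriting it in terms of $m$ gives $m(m - \delta + c) = p\beta U_0$. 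Equivalently, $\beta U_0 p - cm + \delta m = m^2$. Substituting this identity yields
\begin{equation*}
  \dot h = I\, m^2 \left(1 - e^{-\beta Z}\right),
\end{equation*}
which is nonnegative whenever $I\geq 0$ and $Z \geq 0$, i.e.\ throughout $\mathcal T$.

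The main obstacle is making this eigenvalue identity come out cleanly: the particular choice of the slope $\lambda_++\delta$ is exactly what cancels all terms except the one carrying the factor $1-e^{-\beta Z}$. I would double-check it against \eqref{eq:eigval} with $w=1$. A secondary technical point is that $\mathcal T$ is defined by several inequalities rather than a single $h$, so the cited barrier result applies only to each piece separately. At corners (for example $I=0$, $h=0$, which forces $V=0$) I would argue either with Nagumo's tangency condition for the intersection, or more simply by noting that these corner points are equilibria or points where every defining constraint has nonnegative derivative, as computed above. With this, each boundary face has an inward-pointing or tangent vector field, and forward invariance of $\mathcal T$ follows.
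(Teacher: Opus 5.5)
Your proposal is correct and follows the paper's proof essentially step for step. It uses the same barrier $h=\beta U_0V-(\lambda_++\delta)I$, the same substitution $\beta U_0V=(\lambda_++\delta)I$ on $\{h=0\}$, the same characteristic-polynomial identity $(\lambda+\delta)(\lambda+c)=p\beta U_0$ to reduce $\dot h$ to $(\lambda_++\delta)^2(1-e^{-\beta Z})I$, and invariance of $\R^3_{\geq0}$ for the faces $Z=0$, $I=0$; your face-by-face Nagumo check at the corners makes precise the paper's one-line appeal to $\mathcal T=\R^3_{\geq0}\cap\mathcal S$.

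Both arguments tacitly need $\lambda_++\delta>0$, and you should state it. It holds for the larger root $\lambda_+=\tfrac12\bigl(-(\delta+c)+\sqrt{(\delta+c)^2+4\delta c(R_0-1)}\bigr)$, where $\lambda_++\delta>\delta$ for $R_0>1$. This positivity is what forces $V\geq0$ on $\mathcal T$, hence $\dot Z=V\geq0$ on the face $Z=0$. For the other root, $\mathcal T$ contains points with $Z=0$, $V<0$ and is not invariant.
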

\begin{proof}
    We first observe that $\mathcal T$ defines the region between three hyperplanes given by $Z=0$, $I=0$ and \mbox{$\beta U_0 V = (\lambda_+ + \delta) I$}.
    Letting $x=(Z,I,V)$, we define $h(x) = \beta U_0 V - (\lambda_+ + \delta) I$.
    Now, $\mathcal T = \R^3_{\geq 0} \cap \mathcal S$, where \mbox{$\mathcal S = \{ x: h(x) \geq 0 \}$}.
    Since $\R^3_{\geq 0}$ is forward invariant, we can conclude that $\mathcal T$ is forward invariant by showing that $\dot{h}(x) \geq 0$ for all $x \in \partial S \cap \R^3_{\geq 0}$.
    Computing:
    \begin{align*}
        \dot h ( x) &= \beta U_0 (p I - c V) - (\lambda_+ + \delta)( \beta U_0 V e^{-\beta Z} - \delta I) \\
        &= (p \beta U_0 + (\lambda_+ + \delta)\delta)I - (c + (\lambda_+ + \delta) e^{-\beta Z}) \beta U_0 V,
    \end{align*}
    and evaluating $\dot h (x)$ on the boundary $\partial \mathcal S$, by substituting \mbox{$\beta U_0 V = (\lambda_+ + \delta) I$}, we obtain:
    \begin{align*}
        \begin{split}
            \left. \dot h(x) \right|_{\partial \mathcal S} &= (p \beta U_0 + (\lambda_+ + \delta)\delta)I \\& \qquad- (c + (\lambda_+ + \delta) e^{-\beta Z})(\lambda_+ + \delta)I
        \end{split} \\
        \begin{split}
            &= (\lambda_++\delta)^2(1-e^{-\beta Z})I \\
            & \qquad \qquad - \underbrace{(\lambda_+^2 + (\delta+c)\lambda_+ + c\delta - p \beta U_0)}_{0}I
        \end{split} \\
        &= (\lambda_++\delta)^2(1-e^{-\beta Z})I,
    \end{align*}
    where the second term in the second equality is zero since $\lambda_+$ satisfies \eqref{eq:eigval}. Since $e^{-\beta Z} \leq 1$ for all $Z \geq 0$, we conclude that $\dot h (x) \geq 0$ for all $x \in \R^3_{\geq 0} \cap \partial \mathcal S$, and hence $\mathcal{T}$ is forward invariant.
\end{proof}

Since the set $\mathcal{T}$ is forward invariant, and the initial condition of the host, $(0, 0, V_0)$, lies within $\mathcal T$ for any $V_0 \in \R_{>0}$,
the state of the system at any time $t$ lies within $\mathcal{T}$. Hence:
\begin{align}
    I(t) & \leq \frac{\beta U_0 V(t)}{\lambda_+ + \delta}.
    \label{eq:bound_i}
\end{align}

\begin{remark}
    In our analysis, we consider a more general initial condition $(0, I_0, V_0)$ of some infected host at some time $t_0$, where $t_0$ occurs after the start of infection. Although any arbitrary such state is not guaranteed to be within $\mathcal T$, from a biological perspective all infected hosts start with some finite viral load and 0 infected cells \cite{hernandez2022modelling_epidemics}, i.e., the host's state at the start of infection lies within $\mathcal T$. Hence, any initial condition $(0, I_0, V_0)$ encountered in a host is also guaranteed to lie within $\mathcal T$ and thus \eqref{eq:bound_i} remains valid.
\end{remark}

To bound the state $Z$ (and consequently $U$) we simply use the fact that $Z(t)$ is monotonically increasing:
\begin{equation}
    Z(t) \geq Z_0 \quad \iff \quad U(t) \leq U_0.
    \label{eq:bound_u}
\end{equation}

\subsection{Peak Viral Load}

In this section we address Problem \ref{prob:peak}, where given a single uncertain measurement of the viral load $V$, obtained prior to the peak, we derive an upper bound for the peak value $V_p$ attained at some time $T_p$.
We achieve this by applying Theorem \ref{thm:ziv_bound} and selecting a strategic point along the trajectory to initialize the linear upper bounding system.

Theorem \ref{thm:ziv_bound} essentially states that given any point $(Z_\tau, I_\tau, V_\tau)$ on the trajectory of the nonlinear system \eqref{eq:ziv} corresponding to some time $\tau$, we have:
\begin{equation}
    \begin{pmatrix}
        Z(t) \\ I(t) \\ V(t)
    \end{pmatrix} \leq \exp(A(e^{-\beta Z_\tau})(t-\tau))
    \begin{pmatrix}
        Z_\tau \\ I_\tau \\ V_\tau
    \end{pmatrix}, \quad \forall t\geq \tau,
    \label{eq:upper_bound}
\end{equation}
where $\exp(\cdot)$ represents the matrix exponential.

To make a strategic choice for $\tau$, we first observe that the linear dynamics \eqref{eq:lin_ziv_w} is independent of the state $\t Z$, allowing us to extract the subsystem:
\begin{equation}
    \begin{pmatrix}
        \dot {\t I} \\ \dot {\t V}
    \end{pmatrix} = 
    \underbrace{
    \begin{pmatrix}
        -\delta & \beta U_0w\\
        p & -c\\
    \end{pmatrix}
    }_{\t A(w)}
    \begin{pmatrix}
        \t I \\ \t V
    \end{pmatrix}.
    \label{eq:lin_iv_w}
\end{equation}

\begin{lemma}
    Given a system \eqref{eq:ziv} and an initial condition $(Z_0, I_0, V_0) \in \R^3_{\geq 0}$, there exists a unique time $T_0 \in \R_{\geq 0}$ such that the system \eqref{eq:lin_iv_w} with $w=e^{-\beta Z(\tau)}$ is:
    \begin{enumerate}
        \item unstable for $\tau < T_0$.
        \item marginally stable when $\tau = T_0$.
        \item exponentially stable for $\tau > T_0$.
    \end{enumerate}
    Furthermore:
    \begin{equation}
    Z(T_0) = \frac{1}{\beta} \ln (R_0).
    \label{eq:Z_T0}
\end{equation}
\label{lem:T0}
\end{lemma}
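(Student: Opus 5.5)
The plan is to reduce stability of the $2\times 2$ system \eqref{eq:lin_iv_w} to the sign of a single scalar, and then use the monotonicity of $Z$ along trajectories of \eqref{eq:ziv}.

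\textbf{Step 1 (eigenvalues of $\t A(w)$).} The trace of $\t A(w)$ is $-(\delta+c)<0$. Its determinant is
\[
\delta c - p\beta U_0 w = \delta c\,(1 - R_0 w).
\]
For a real $2\times 2$ matrix with negative trace, the sign of the determinant decides everything. It is exponentially stable iff $\det>0$, i.e.\ $R_0 w<1$. It has a zero eigenvalue and one negative eigenvalue $-(\delta+c)$ iff $R_0 w=1$; this is the marginally stable case. It has a positive real eigenvalue iff $R_0 w>1$. This matches \eqref{eq:eigval}: $\lambda_-(w)$, the root taken with the minus sign inside, equals $0$ exactly when $R_0w=1$ and is positive when $R_0 w>1$. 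The discriminant $(\delta-c)^2+4p\beta U_0 w$ is nonnegative for $w\ge 0$, so the eigenvalues are real and no oscillatory borderline case occurs.

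\textbf{Step 2 (substituting $w=e^{-\beta Z(\tau)}$).} The stability condition becomes a condition on $Z(\tau)$:
\[
R_0 e^{-\beta Z(\tau)} \gtrless 1 \iff Z(\tau) \lessgtr \tfrac{1}{\beta}\ln R_0 .
\]
So the system is unstable, marginally stable or exponentially stable according as $Z(\tau)$ is below, equal to or above $\frac1\beta\ln R_0$. The threshold is positive because $R_0>1$.

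\textbf{Step 3 (existence and uniqueness of $T_0$).} By Remark~\ref{rem:fwd_invariant_ziv}, $Z$ is strictly increasing, since $\dot Z=V>0$. Hence $Z$ crosses the level $\frac1\beta\ln R_0$ at most once, which gives uniqueness and the three-way ordering in $\tau$. For existence, we need $Z(0)=Z_0\le \frac1\beta \ln R_0$ and $Z(t)\to\infty$ or at least past the level. The hard part is existence of the crossing. Boundedness of $Z$ is not a problem in itself; since $U_\infty$ exists, $Z$ converges to $Z_\infty=-\frac1\beta\ln(U_\infty/U_0)$. The real task is to show $Z_\infty>\frac1\beta\ln R_0$, i.e.\ $U_\infty<U_0/R_0=\delta c/(p\beta)$.

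\textbf{Step 4 (showing $U_\infty<U_0/R_0$).} I would argue this via the conserved quantity \eqref{eq:const_ziv}. As $t\to\infty$ we have $I,V\to 0$, so $K=U_\infty+\frac{\beta U_0}{R_0}Z_\infty$. Set $g(Z)=U_0e^{-\beta Z}+\frac{\beta U_0}{R_0}Z$. This function is convex and has its minimum at $Z^*=\frac1\beta\ln R_0$. Now $K=g(Z_\infty)$, while $K\ge g(Z_0)+I_0+\frac{\delta}{p}V_0>g(Z_0)$. If $Z_0<Z^*$, then $g$ is decreasing on $[Z_0,Z^*]$, so $g(Z_\infty)>g(Z_0)$ forces $Z_\infty>Z^*$.

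Alternatively, one can argue directly. If $Z$ stayed below $Z^*$ forever, the linearization would have $R_0e^{-\beta Z}>1$ bounded away from $1$, and by the comparison this prevents $V\to0$. That contradicts the remark that $V\to 0$.

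\textbf{Remaining case.} If $Z_0\ge Z^*$ already, take $T_0$ to be the initial time. Strictly speaking, this case needs either the implicit assumption that the measurement is taken before $T_0$ or an interpretation of $T_0=0$. I would state it with that caveat, since with $Z_0>Z^*$ there is no marginal-stability time.

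The identity \eqref{eq:Z_T0} then follows from Step 2.
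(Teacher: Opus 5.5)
Your proposal is correct. For the stability trichotomy and for uniqueness it follows the paper's proof. The paper reads the stability type off \eqref{eq:eigval} according to whether $e^{-\beta Z(\tau)}$ is below or above $R_0^{-1}$, and then invokes strict monotonicity of $Z$. You get the same trichotomy from the trace $-(\delta+c)$ and determinant $\delta c(1-R_0w)$ of $\t A(w)$. This has the minor advantage of not depending on which root in \eqref{eq:eigval} is called $\lambda_+$; the paper's labelling there conflicts with its later use of $\lambda_+$ as the positive root.

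The genuine difference is existence of $T_0$. The paper's closing sentence, that the rest follows from strict monotonicity of $Z$, only gives uniqueness and the ordering in $\tau$. It never shows that $Z$ actually reaches $\frac{1}{\beta}\ln R_0$. Your Step 4 fills this with a final-size argument: $g(Z_\infty)=K=g(Z_0)+I_0+\frac{\delta}{p}V_0>g(Z_0)$, together with convexity of $g$ and its minimiser at $Z^*$, forces $Z_\infty>Z^*$, equivalently $U_\infty<U_0/R_0$. This is sound given the paper's remarks that $I,V\to0$ and $U_\infty>0$. The strict inequality only needs $(I_0,V_0)\neq(0,0)$, a nondegeneracy already implicit in the paper's assertion $\dot Z=V>0$.

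Your caveat about $Z_0>Z^*$ is also right: the lemma as literally stated for arbitrary initial conditions in $\R^3_{\geq0}$ fails there. In the paper's own usage, however, \eqref{eq:change_of_var} always gives $Z_0=0<Z^*$, since $R_0>1$.

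The only soft spot is your ``alternative'' argument. As written it assumes $R_0e^{-\beta Z}$ stays bounded away from $1$, which fails if $Z_\infty=Z^*$. It can be repaired by comparing, via monotonicity, with the marginally stable system $w=R_0^{-1}$, whose $\t V$ tends to $\frac{p}{\delta+c}\bigl(I_0+\frac{\delta}{p}V_0\bigr)>0$. The convexity argument already makes this unnecessary, so in short: the paper's proof is shorter but leaves existence unproved, and yours is complete and also exposes the needed restriction on $Z_0$.
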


\begin{proof}
    Given the solutions of \eqref{eq:ziv} from an initial condition $(Z_0, I_0, V_0) \in \R^3_{\geq 0}$,
    for any time $\tau \in\R_{\geq 0}$, the corresponding subsystem of the linear upper bounding system at this point can be constructed with $w = e^{-\beta Z(\tau)}$ in \eqref{eq:lin_iv_w}. Since eigenvalues of $\t A(w)$ are $\lambda_+(w)$ and $\lambda_-(w)$ from \eqref{eq:eigval}, we see that when $e^{-\beta Z(\tau)} < R_0^{-1}$ the system \eqref{eq:lin_iv_w} is exponentially stable and when $e^{-\beta Z(\tau)} > R_0^{-1}$ it is unstable. The rest of the proof follows from the fact that $Z(\tau)$ is strictly monotonically increasing in $\tau$.
\end{proof}

The preceding Lemma states that there exists a unique time $T_0$, when the linear upper bounding system for the dynamics of $I$ and $V$ given by \eqref{eq:lin_ziv_w} switch from being exponentially increasing to exponentially decaying. We now use this specific value of time $\tau = T_0$ to initialize the upper bounding system as in \eqref{eq:upper_bound} and obtain bounds for $V(t)$.

In the remainder of this section, we show that the peak value of $\t V(t)$ of this marginally stable linear system is an upper bound for the true peak $V_p$. Since the result of Theorem \ref{thm:ziv_bound} applied to this system only holds for all $t \geq T_0$, we first need to show that the peak time $T_p$ of $V(t)$ occurs after $T_0$.

\begin{prop}
    Given a system \eqref{eq:ziv} and an initial condition $(Z_0, I_0, V_0)$, the times $T_p$ and $T_0$ satisfy $T_0 < T_p$.
    \label{prop:T_inequality}
\end{prop}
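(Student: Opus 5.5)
The plan is to argue by contradiction: assume $T_p \leq T_0$. By Lemma \ref{lem:T0}, this means that at the peak $Z(T_p) \leq \frac{1}{\beta}\ln(R_0)$, i.e. $e^{-\beta Z(T_p)} \geq R_0^{-1}$. I will then show that $V$ cannot be at its maximum at such a time.

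At the peak time $T_p$ we have $\dot V(T_p) = 0$, so $pI(T_p) = cV(T_p)$. Differentiating once more gives
\begin{equation*}
\ddot V(T_p) = p\dot I(T_p) - c\dot V(T_p) = p\dot I(T_p) = p\left(\beta U_0 V(T_p) e^{-\beta Z(T_p)} - \delta I(T_p)\right).
\end{equation*}
Substituting $I(T_p) = \frac{c}{p}V(T_p)$ yields
\begin{equation*}
\ddot V(T_p) = V(T_p)\left(p\beta U_0 e^{-\beta Z(T_p)} - \delta c\right) = \delta c\, V(T_p)\left(R_0 e^{-\beta Z(T_p)} - 1\right).
\end{equation*}
Here $V(T_p) > 0$, which holds by positivity of the orthant together with $V_0 > 0$; the linear decay in $\dot V = pI - cV$ keeps $V$ positive. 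If $T_p < T_0$, then strict monotonicity of $Z$ gives $R_0 e^{-\beta Z(T_p)} > 1$. Hence $\ddot V(T_p) > 0$, so $T_p$ is a strict local minimum, which contradicts it being the (single) peak. This settles the case $T_p < T_0$.

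The delicate case, and the main obstacle, is $T_p = T_0$, where $\ddot V(T_p) = 0$ and the second-derivative test is inconclusive. Here I would compute $\dddot V(T_p)$. Using $\dot V(T_p) = 0$ and $\ddot V(T_p) = 0$, we get $\dddot V = p\ddot I$, where
\begin{equation*}
\ddot I = \beta U_0 e^{-\beta Z}\left(\dot V - \beta V^2\right) - \delta \dot I .
\end{equation*}
Since $\dot I(T_p) = \ddot V(T_p)/p = 0$, this reduces to $\dddot V(T_p) = -p\beta^2 U_0 e^{-\beta Z} V^3 < 0$. An odd-order first nonvanishing derivative means $T_p$ is an inflection point of $V$, not a local maximum. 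This is again a contradiction, so $T_0 < T_p$.

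A small additional care point is the case where the initial condition is already past $T_0$ or the peak occurs at $t = t_0$ on the boundary. The statement concerns the interior peak of a measurement taken before the peak, so $\dot V(T_p) = 0$ holds at an interior critical point. I would note explicitly that the single-peak property (Remark in Section \ref{sct:background}) ensures that the peak is an interior critical point.
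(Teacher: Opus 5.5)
Your proof is correct and rests on the same computation as the paper's. At the peak, $\dot V(T_p)=0$ gives $\ddot V(T_p)=\delta c\,V(T_p)\bigl(R_0e^{-\beta Z(T_p)}-1\bigr)$, and monotonicity of $Z$ compares this with $e^{-\beta Z(T_0)}=R_0^{-1}$. Your third-derivative treatment of the degenerate case $T_p=T_0$ closes a gap the paper leaves open: the paper simply asserts $\ddot V(T_p)<0$, whereas a maximum only guarantees $\ddot V(T_p)\le 0$, which alone yields only $T_p\ge T_0$. One small slip: $\dddot V(T_p)=-p\beta^2U_0e^{-\beta Z(T_p)}V(T_p)^2$ (not $V^3$), which does not affect the sign.
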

\begin{proof}
    At the peak of $V(t)$, we have that $\dot{V} (T_p)=0$ and $\ddot{V}(T_p) < 0$. Denoting the state at this time by $(Z_p, I_p, V_p)$, the first derivative condition results in:
    \begin{equation}
        p I_p = c V_p,
        \label{eq:v_peak_cond}
    \end{equation}
    and the second derivative condition yields:
    \begin{equation}
        \ddot{V}(T_p) = p \dot I (T_p) - c \dot{V} (T_p) = p \dot I (T_p) < 0.
    \end{equation}
    Substituting for $\dot I$ from the dynamics, we get:
    \begin{equation*}
        \beta U_0 V_p e^{-\beta Z_p} < \delta I_p.
    \end{equation*}
    Substituting from \eqref{eq:v_peak_cond} we have:
    \begin{equation*}
        e^{-\beta Z(T_p)} < \frac{\delta c}{p \beta U_0} = R_0^{-1} = e^{-\beta Z(T_0)}.
    \end{equation*}
    Thus by monotonicity of $Z(t)$ we have $T_0 < T_p$.
\end{proof}

Now, Theorem \ref{thm:ziv_bound} allows us to directly conclude that the peak of the marginally stable linear system, denoted by $\t V_p$, is an upper bound for $V_p$.
What remains is to compute $\t V_p$ in terms of known quantities.
We achieve this by first substituting $w=R_0^{-1}$ in \eqref{eq:const_lin}, to get the constant of motion for our chosen linear system:
\begin{equation}
    \t K(R_0^{-1}) = \t I (t) + \frac{\delta}{p} \t V (t).
    \label{eq:i_v_const}
\end{equation}
Now suppose that $\t V(t)$ attains its peak at some $t=\t T_p$ and denote its state at this point by $(\t Z_p, \t I_p, \t V_p)$. Since at the peak $\dot{\t V} = 0$, we have that:
\begin{equation*}
    p \t I_p - c \t V_p = 0.
\end{equation*}
Combined with \eqref{eq:i_v_const} we get:
\begin{equation}
    \t V_p = \frac{p}{\delta+c} \t K(R_0^{-1}).
    \label{eq:v_tilde_peak}
\end{equation}

We summarize the arguments made in this section in the following theorem and express \eqref{eq:v_tilde_peak} in terms of the initial conditions of the original system.

\begin{theorem}
    Let $(U_0, I_0, V_0)$ be the state of the system \eqref{eq:uiv} at some time $t_0 \leq T_0$.
    The peak value $V_p$ attained by $V(t)$ at time $T_p$ is bounded by:
    \begin{equation}
        V_p \leq \frac{p}{\delta+c} \left[  U_0 \left(1 - \frac{1+\ln (R_0)}{R_0}  \right) +I_0 + \frac{\delta}{p} V_0\right].
        \label{eq:peak_initial_cond}
    \end{equation}
    \label{thm:peak_v}
\end{theorem}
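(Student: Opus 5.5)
We first reduce the claim to the peak of the linear upper bounding system started at $\tau = T_0$. By Lemma~\ref{lem:T0} there is a unique $T_0$ with $Z(T_0) = \frac{1}{\beta}\ln(R_0)$, so $e^{-\beta Z(T_0)} = R_0^{-1}$. Initialize \eqref{eq:lin_ziv_w} with $w = R_0^{-1}$ at the state $(Z(T_0), I(T_0), V(T_0))$. Theorem~\ref{thm:ziv_bound}, in the form \eqref{eq:upper_bound}, gives $V(t) \leq \t V(t)$ for all $t \geq T_0$. Proposition~\ref{prop:T_inequality} gives $T_p > T_0$, so $V_p = V(T_p) \leq \t V(T_p) \leq \sup_{t \geq T_0} \t V(t)$.

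Next we need this supremum to equal the value $\t V_p$ in \eqref{eq:v_tilde_peak}. With $w = R_0^{-1}$, the subsystem \eqref{eq:lin_iv_w} has eigenvalues $0$ and $-(\delta+c)$. Its solution therefore converges monotonically along the eigendirection of $-(\delta+c)$ to the point on the zero-eigenvector line $pI = cV$ fixed by the invariant \eqref{eq:i_v_const}. That is, $\t V(t) = \t V_\infty + (\t V(T_0) - \t V_\infty)e^{-(\delta+c)(t-T_0)}$, where $\t V_\infty = \frac{p}{\delta+c}\t K(R_0^{-1})$.

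This is the step requiring care. If $\t V(T_0) \leq \t V_\infty$, the supremum is $\t V_\infty$, which matches \eqref{eq:v_tilde_peak}. To guarantee this we need $\dot{\t V}(T_0) \geq 0$, i.e. $pI(T_0) \geq cV(T_0)$. This follows from $\dot V(T_0) \geq 0$, which holds because $V$ has a single peak (Remark) occurring at $T_p > T_0$. So $\sup \t V = \frac{p}{\delta+c}\t K(R_0^{-1})$.

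Finally, we express $\t K(R_0^{-1})$ through Proposition~\ref{prop:const_relation} with $t_1 = T_0$ and $w = R_0^{-1}$:
\begin{equation*}
\t K(R_0^{-1}) = K - \frac{\beta U_0}{R_0} Z(T_0) - U_0 e^{-\beta Z(T_0)} = K - \frac{U_0 \ln(R_0)}{R_0} - \frac{U_0}{R_0}.
\end{equation*}
Here we used \eqref{eq:Z_T0}. Now substitute \eqref{eq:const_val}, $K = U_0 + I_0 + \frac{\delta}{p}V_0$, where the hypothesis $t_0 \leq T_0$ lets us take the given state as the initial point of the trajectory, with $Z = 0$ at $t_0$. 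Multiplying by $\frac{p}{\delta+c}$ then yields \eqref{eq:peak_initial_cond}.
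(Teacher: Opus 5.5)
Your proof is correct and follows the paper's route (Lemma~\ref{lem:T0}, Theorem~\ref{thm:ziv_bound} at $\tau=T_0$, Proposition~\ref{prop:T_inequality}, the invariant \eqref{eq:i_v_const}, Proposition~\ref{prop:const_relation}). It is actually more careful than the paper, which simply posits that $\t V$ attains a peak where $\dot{\t V}=0$. In fact $\t V$ converges monotonically to $\frac{p}{\delta+c}\t K(R_0^{-1})$, and this limit is the supremum only if $\t V(T_0)$ lies below it, which is exactly what your condition $\dot V(T_0)\ge 0$ secures.

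One small caution: ``single peak'' does not make $V$ nondecreasing on $[t_0,T_p]$ (with $I_0=0$ one has $\dot V(t_0)=-cV_0<0$). So $\dot V(T_0)\ge 0$ is better justified by observing that on $\{pI=cV\}$ one has $\frac{d}{dt}(pI-cV)=V\,(p\beta U_0e^{-\beta Z}-\delta c)<0$ for $t>T_0$. Hence $pI(T_0)<cV(T_0)$ would force $\dot V<0$ for all $t\ge T_0$, contradicting $\dot V(T_p)=0$ with $T_p>T_0$.
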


\begin{proof}
    Given a system \eqref{eq:uiv} and the initial condition $(U_0, I_0, V_0)$, we apply the change of variables \eqref{eq:change_of_var} to construct a system of the form \eqref{eq:ziv} with initial state $(0, I_0, V_0)$. Now by Lemma \ref{lem:T0} there exists a unique time $T_0$ such that the subsystem \eqref{eq:lin_iv_w} of the linear upper bounding system with $w = e^{-\beta Z(T_0)} = R_0^{-1}$ is marginally stable. Furthermore by Theorem \ref{thm:ziv_bound}:
    \begin{equation*}
        V(t) \leq \t V(t), \quad \forall t \geq T_0,
    \end{equation*}
    where $\t V(t)$ is the viral load trajectory of the marginally stable linear system, when it is initialized with $(Z(T_0), I(T_0), V(T_0))$.
    Since $T_0 < T_p$, by Proposition \ref{prop:T_inequality}, we have that:
    \begin{equation*}
        V(T_p) \leq \t V_p = \frac{p}{\delta+c} \t K(R_0^{-1}).
    \end{equation*}
    Now, Proposition \ref{prop:const_relation} relates $\t K (R_0^{-1})$ and $K$, yielding:
    \begin{equation*}
    \t V_p = \frac{p}{\delta+c} \left[  K - \frac{U_0}{R_0}(1+\ln (R_0) ) \right].
    \end{equation*}
    Finally, substituting for $K$ in terms of the initial conditions from \eqref{eq:const_val}, we obtain \eqref{eq:peak_initial_cond}.
\end{proof}

Theorem \ref{thm:peak_v} provides an upper bound for the peak value of the viral load $V_p$ in terms of the initial conditions $(U_0, I_0, V_0)$ of the nonlinear system. We now extend this bound to the case where only a single uncertain measurement of the state $V(t)$ is known.

\begin{corollary}
Suppose a host with dynamics \eqref{eq:uiv} is tested at some time $t_s\leq T_0$ 
and the test outcome is $[\underline{V}_s,\overline{V}_s] \ni V(t_s)$.
Then, the peak value $V_p$ attained by $V(t)$ is bounded by:
\begin{equation} \label{eq:v_peak_test_bound}
        V_p \leq \frac{p}{\delta+c} \left[  \overline{U_0} \left(1 - \frac{1+\ln (R_0)}{R_0}  \right) + \left(\frac{\beta \overline{U_0}}{\lambda_+ + \delta} + \frac{\delta}{p}\right)\overline{V}_s\right].
\end{equation}
\end{corollary}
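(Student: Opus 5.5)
We will apply Theorem \ref{thm:peak_v} with the testing time $t_s$ as the initial time. This is allowed because $t_s \leq T_0$. The right-hand side of \eqref{eq:peak_initial_cond} then involves $U(t_s)$, $I(t_s)$ and $V(t_s)$, and we bound each of these using only $\overline{U_0}$ and $\overline{V}_s$. One point needs care. Theorem \ref{thm:peak_v} is stated with the current target-cell count $U(t_s)$ playing the role of $U_0$, both in the formula and in $R_0$. To keep $R_0$ and $\lambda_+$ fixed, we will instead work in the original $(Z,I,V)$ coordinates with $U_0$ the nominal cell count.

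\textbf{Step 1: re-derive the bound with initial time $t_s$.} Take $U_0$ to be the nominal value at infection and $Z(t_s)=Z_s\geq 0$. The proof of Theorem \ref{thm:peak_v} goes through unchanged up to
\[
V_p \leq \frac{p}{\delta+c}\Big[K-\frac{U_0}{R_0}(1+\ln R_0)\Big].
\]
Evaluating the constant of motion \eqref{eq:const_ziv} at $t_s$ gives
\[
K = U_0e^{-\beta Z_s}+\frac{\beta U_0}{R_0}Z_s+I(t_s)+\frac{\delta}{p}V(t_s).
\]
The function $g(Z)=U_0e^{-\beta Z}+\frac{\beta U_0}{R_0}Z$ has derivative $\beta U_0(R_0^{-1}-e^{-\beta Z})$. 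This is nonpositive for $Z\le Z(T_0)=\ln(R_0)/\beta$, and $Z_s\le Z(T_0)$ because $t_s\le T_0$. Hence $g(Z_s)\le g(0)=U_0$, and so
\[
V_p\le \frac{p}{\delta+c}\Big[U_0\Big(1-\frac{1+\ln R_0}{R_0}\Big)+I(t_s)+\frac{\delta}{p}V(t_s)\Big].
\]
This is exactly \eqref{eq:peak_initial_cond}, with $(U_0,I_0,V_0)$ replaced by the nominal $U_0$ and the measured-time $I(t_s),V(t_s)$. This avoids the $U(t_s)$-versus-$U_0$ ambiguity entirely.

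\textbf{Step 2: eliminate $I(t_s)$.} By forward invariance of $\mathcal T$, inequality \eqref{eq:bound_i} gives $I(t_s)\le \beta U_0V(t_s)/(\lambda_++\delta)$. Here $\lambda_++\delta>0$, which we verify from \eqref{eq:eigval} with $R_0>1$. Then we use $V(t_s)\le \overline V_s$. Every coefficient of $V(t_s)$ is nonnegative, so replacing $V(t_s)$ by $\overline V_s$ only increases the bound.

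\textbf{Step 3: replace $U_0$ by $\overline{U_0}$.} This is the main obstacle, because $R_0$ and $\lambda_+$ themselves depend on $U_0$. The coefficient $\beta U_0/(\lambda_++\delta)$ is increasing in $U_0$ if $\lambda_+$ is read at the actual $U_0$, so we must interpret $R_0$ and $\lambda_+$ in \eqref{eq:v_peak_test_bound} consistently; we will evaluate them at $\overline{U_0}$. For the first term we write $U_0(1-\frac{1+\ln R_0}{R_0})=U_0-\frac{\delta c}{p\beta}(1+\ln R_0)$. Its derivative in $U_0$ is $1-\frac{1}{R_0}>0$, so it is increasing and bounded by its value at $\overline{U_0}$.

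For the second term, note that $-\lambda_+$ is the largest root of $\lambda^2+(\delta+c)\lambda+\delta c(1-R_0)=0$. Rewriting, $\beta U_0/(\lambda_++\delta)$ equals $(\lambda_++c)/p$ using $(\lambda_++\delta)(\lambda_++c)=p\beta U_0$. The quantity $-\lambda_+$ grows with $R_0$, so we need to check the sign convention carefully. Under the paper's convention $\lambda_+=-\frac12(\delta+c+\sqrt{\cdot})$, and the relevant expression $(\lambda_++c)/p$ should come out monotone in $U_0$ in the correct direction. If instead the monotonicity fails, we fall back to stating the bound with $R_0,\lambda_+$ evaluated at $\overline{U_0}$ and justify it by the fact that \eqref{eq:bound_i} holds with any $\lambda$ for which the corresponding invariance computation still gives $\dot h\ge0$. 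Combining Steps 1–3 yields \eqref{eq:v_peak_test_bound}.
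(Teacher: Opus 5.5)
Your Step~1 is sound and more careful than the paper's argument. The paper simply invokes Theorem~\ref{thm:peak_v} at $(U_s,I_s,V_s)$ while keeping the nominal $R_0$. It then uses $U_s\le U_0\le\overline{U_0}$ with the coefficient $1-\frac{1+\ln R_0}{R_0}\ge 0$ (this is where $1+\ln R_0\le R_0$ enters) and $\beta/(\lambda_++\delta)>0$. It never replaces $U_0$ inside $R_0$ or $\lambda_+$, so it needs no monotonicity in those quantities. Your evaluation of $K$ at $t_s$, with $g(Z_s)\le g(0)=U_0$ for $Z_s\le\ln(R_0)/\beta$, handles the re-basing rigorously. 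Your first-term computation $\frac{d}{dU_0}\bigl[U_0-\frac{\delta c}{p\beta}(1+\ln R_0)\bigr]=1-R_0^{-1}>0$ is also correct.

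The gap is in Steps~2--3. You never settle the sign convention for $\lambda_+$. The decisive monotonicity of the second coefficient is left as ``should come out \dots; if instead it fails, we fall back to \dots'', which is not an argument.

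Worse, the convention you explicitly adopt, $\lambda_+=-\frac12(\delta+c+\sqrt{\cdot})$, breaks the proof:
\begin{itemize}
\item $\lambda_++\delta=\frac12(\delta-c-\sqrt{\cdot})<-c<0$. This contradicts your Step~2 claim and voids \eqref{eq:bound_i}.
\item $(\lambda_++c)/p<0$, and it is decreasing in $U_0$.
\item $-\lambda_+$ is not a root of $\lambda^2+(\delta+c)\lambda+\delta c(1-R_0)$; the eigenvalues themselves are its roots.
\end{itemize}

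The missing observation is that the $\pm$ in \eqref{eq:eigval} is a sign slip. The $\lambda_+$ that makes $\mathcal T$ nontrivial is the positive root
\[
\lambda_+=\tfrac12\bigl(-(\delta+c)+\sqrt{(\delta-c)^2+4p\beta U_0}\bigr).
\]
This is also the one the paper actually uses in proving \eqref{eq:time_thresh}, where $\alpha=-\frac12(\delta+c-\sqrt{\cdot})=\lambda_+$. With this root, $\lambda_++\delta>\delta>0$ and $\lambda_+$ is increasing in $U_0$. Your identity $(\lambda_++\delta)(\lambda_++c)=p\beta U_0$ then gives $\beta U_0/(\lambda_++\delta)=(\lambda_++c)/p$, which is increasing in $U_0$. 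So evaluating $R_0$ and $\lambda_+$ at $\overline{U_0}$ is legitimate, and Step~3 closes.

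Your fallback would also work if carried out. For $h=kV-I$ with $k=(\mu+c)/p$, one finds on $\{h=0\}$ that $\dot h=V\bigl[(\mu+c)(\mu+\delta)/p-\beta U_0e^{-\beta Z}\bigr]$. Take $\mu$ to be the positive root computed with $\overline{U_0}$ in place of $U_0$. Then $(\mu+c)(\mu+\delta)=p\beta\overline{U_0}\ge p\beta U_0e^{-\beta Z}$, so $\dot h\ge 0$.

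Alternatively, follow the paper and leave $R_0,\lambda_+$ at the true $U_0$. Then Step~3 only needs nonnegativity of the two coefficients.
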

\begin{proof}
Suppose the true state at time $t_s$ is $(U_s, I_s, V_s)$.
Invoking Theorem \ref{thm:peak_v} with this initial condition, we get:
\begin{equation*}
    V_p \leq \frac{p}{\delta+c} \left[  U_s \left(1 - \frac{1+\ln (R_0)}{R_0}  \right) +I_s + \frac{\delta}{p}V_s\right].
\end{equation*}

Now, using \eqref{eq:bound_i} and \eqref{eq:bound_u} to bound $I_s$ and $U_s$, and the fact that \mbox{$1+\ln (R_0) \leq R_0$}, we get \eqref{eq:v_peak_test_bound}.
\end{proof}

The expression \eqref{eq:v_peak_test_bound} answers Problem \ref{prob:peak}, as it provides an upper bound for the peak of the viral load solely in terms of the outcome of a single uncertain test result $[\underline{V}_s, \overline{V}_s]$.

\subsection{Time to infectiousness}

We now address Problem \ref{prob:time}, by lower bounding the time taken for a host to reach the infectiousness threshold $V_\text{Thresh}$.

\begin{theorem}
Suppose a host with dynamics \eqref{eq:uiv} is tested at some time $t=t_s$ and the test outcome is $[\underline{V}_s,\overline{V}_s] \ni V(t_s)$.
Then the time $T$ at which the host becomes infectious, i.e., $V(T)=V_\text{Thresh}$, is lower bounded as:
\begin{equation}
    T \geq  t_s + \frac{1}{\lambda_+} \ln\left( \frac{V_\text{Thresh}}{\overline V_s} \right).
    \label{eq:time_thresh}
\end{equation}
\end{theorem}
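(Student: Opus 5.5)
The plan is to combine the linear upper bounding system of Theorem \ref{thm:ziv_bound} with the invariant set $\mathcal T$, and to observe that the extremal point of $\mathcal T$ for a given $V$ lies on an eigenvector of the linear system.

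\textbf{Step 1: reduce to a single linear system.} Write the true state at the test time as $(Z_s,I_s,V_s)$. Theorem \ref{thm:ziv_bound}, in the form \eqref{eq:upper_bound} with $\tau=t_s$, bounds the $(I,V)$ components of the nonlinear trajectory for all $t\ge t_s$ by the subsystem \eqref{eq:lin_iv_w} with $w=e^{-\beta Z_s}$. The $Z$ component plays no role, since \eqref{eq:lin_iv_w} decouples from $\t Z$. Because $Z_s\ge 0$, we have $w\le 1$. Proposition \ref{prop:ziv_monotone} says the system is monotone in the input $w$, so I replace $w$ by the constant $1$. This gives
\[
V(t)\le \hat V(t)\quad \forall t\ge t_s,
\]
where $(\hat I,\hat V)$ solves \eqref{eq:lin_iv_w} with $w=1$ from $(I_s,V_s)$.

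\textbf{Step 2: raise the initial condition to an eigenvector.} The trajectory lies in $\mathcal T$, so \eqref{eq:bound_i} gives $I_s\le I^\ast:=\beta U_0 V_s/(\lambda_++\delta)$. The linear system is monotone by Theorem \ref{thm:monotone_cond}, since its off-diagonal entries are nonnegative. Hence I may raise the initial condition to $(I^\ast,V_s)$.

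I then check directly that $(I^\ast,V_s)$ is an eigenvector of $\t A(1)$ with eigenvalue $\lambda_+$. The first row gives
\[
-\delta I^\ast+\beta U_0V_s=(\lambda_++\delta)I^\ast-\delta I^\ast=\lambda_+ I^\ast.
\]
The second row gives
\[
pI^\ast-cV_s=\Big(\tfrac{p\beta U_0}{\lambda_++\delta}-c\Big)V_s=\lambda_+V_s,
\]
using the characteristic relation $(\lambda_++\delta)(\lambda_++c)=p\beta U_0$. This is the same identity already used in the proof that $\mathcal T$ is forward invariant. Consequently the solution from $(I^\ast,V_s)$ is exactly $e^{\lambda_+(t-t_s)}(I^\ast,V_s)$. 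Combining with Step 1,
\[
V(t)\le V_s e^{\lambda_+(t-t_s)}\le \overline V_s e^{\lambda_+(t-t_s)}\quad \forall t\ge t_s.
\]

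\textbf{Step 3: conclude.} Because the test occurs in the incubation stage, $V(t_s)<V_\text{Thresh}$, so $T>t_s$. Evaluating the bound at $t=T$ gives $V_\text{Thresh}\le \overline V_s e^{\lambda_+(T-t_s)}$. Here $\lambda_+$ denotes the root of $\lambda^2+(\delta+c)\lambda+\delta c-p\beta U_0=0$ with $\lambda_++\delta>0$, i.e.\ the positive, unstable one, as implicitly used in \eqref{eq:bound_i}. Taking logarithms and dividing by $\lambda_+>0$ yields \eqref{eq:time_thresh}.

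\textbf{Main obstacle.} The step needing the most care is Step 2: identifying $\{\beta U_0V=(\lambda_++\delta)I\}$ as the invariant eigen-direction of the linear system, and justifying the use of the comparison principle both in the input $w$ and in the initial condition. The invariance of $\mathcal T$ is what guarantees $I_s\le I^\ast$ from the measurement of $V$ alone.

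A secondary point is the uncertainty in $U_0$. Since $\lambda_+$ increases with $U_0$, the $\lambda_+$ in \eqref{eq:time_thresh} should be evaluated at $\overline U_0$ for a worst-case guarantee. The monotonicity of Proposition \ref{prop:ziv_monotone} in the product $\beta U_0 w$ justifies this replacement in the same way as in Step 1.
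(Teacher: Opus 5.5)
Your proof is correct, but it reaches the key estimate $V(t)\le \overline V_s e^{\lambda_+(t-t_s)}$ by a different route from the paper.

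The paper does not use the linear upper bounding system here at all. It invokes the invariance of $\mathcal T$ at every time and substitutes \eqref{eq:bound_i} directly into $\dot V = pI - cV$, which gives the scalar differential inequality $\dot V\le \alpha V$. It then checks $\alpha=\lambda_+$ by rationalizing the denominator and finishes with the scalar comparison lemma.

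You use membership in $\mathcal T$ only at $t_s$ and then propagate forward with vector comparison arguments. These are Theorem~\ref{thm:ziv_bound}, monotonicity in $w$ to pass to $w=1$, and monotonicity in the initial condition to lift $(I_s,V_s)$ onto the face $\beta U_0 V=(\lambda_++\delta)I$. Your key observation is that this face is the $\lambda_+$-eigenline of $\tilde A(1)$, so the comparison trajectory is an exact exponential.

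The paper's argument is shorter and needs only a one-dimensional comparison. Yours costs two extra monotonicity steps, but it explains structurally why the growth coefficient collapses to $\lambda_+$. Your eigenvector check via $(\lambda_++\delta)(\lambda_++c)=p\beta U_0$ is also a one-line replacement for the paper's rationalization. Your route also puts the time bound in the same comparison framework as the peak bound.

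You are more careful than the paper on two points it leaves implicit:
\begin{itemize}
\item You state that $T>t_s$ is needed. The comparison only runs forward from $t_s$, so the incubation-stage hypothesis of Problem~\ref{prob:time} is genuinely used, not just motivational.
\item You fix $\lambda_+$ as the positive root. This is what \eqref{eq:bound_i} and the paper's computation of $\alpha$ actually require, even though \eqref{eq:eigval} as written attaches the label $\lambda_+$ to the negative root.
\end{itemize}
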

\bluetext{}

\begin{proof}
    Substituting \eqref{eq:bound_i} for $I$ in the dynamics of $V$ in \eqref{eq:uiv}, we get:
    \begin{equation*}
        \dot{V} \leq \underbrace{\left(p\frac{\beta U_0}{\lambda_+ + \delta} - c \right)}_{\alpha} V.
    \end{equation*}
    Substituting for $\lambda_+$ from \eqref{eq:eigval}, we get:
    \begin{align*}
        \alpha &= \frac{2p \beta U_0}{\delta - c + \sqrt{(\delta+c)^2 + 4 \delta c(R_0- 1)}} -c \\
        &= \frac{2p \beta U_0 (\delta - c - \sqrt{(\delta+c)^2 + 4 \delta c(R_0- 1)})}{-4 \delta c R_0} - c\\
        &= \frac{-1}{2} \left(\delta+c - \sqrt{(\delta+c)^2 + 4 \delta c(R_0- 1)} \right) = \lambda_+.
    \end{align*}
    Where the second equality comes from rationalizing the denominator and the third from the definition of $R_0$ \eqref{eq:R0}. Hence $\dot V \leq \lambda_+ V$, and from the comparison lemma \cite{khalil2002nonlinear}, it follows that, $V_{\text{Thresh}} \leq \overline{V_s} e^{\lambda_+ (T - t_s)}.$
\end{proof}

\section{SIMULATIONS} \label{sct:simulations}

In this section, we verify our analytical results through numerical simulations of the UIV model applied to COVID-19. For all simulations, we sample parameters from the ranges established in \cite{hernandez2020host}, strictly maintaining an $R_0$ greater than 1. 
According to \cite{hernandez2020host}, the typical value of uninfected cells for COVID-19 is around $4 \times 10^8$ cells, while the initial viral load is around $50 - 100$ copies/mL. 
Furthermore, while traditional PCR tests yield only a binary indication of infection, advanced techniques like qPCR and ddPCR provide quantitative measurements of viral load with accuraccies of around $\pm 10,000$ copies/mL and $\pm 100$ copies/mL respectively \cite{suo2020ddpcr, artika2022rtpcr, vasudevan2021ddpcr}.

\begin{figure}
    \centering
    \includegraphics[width=0.8\linewidth]{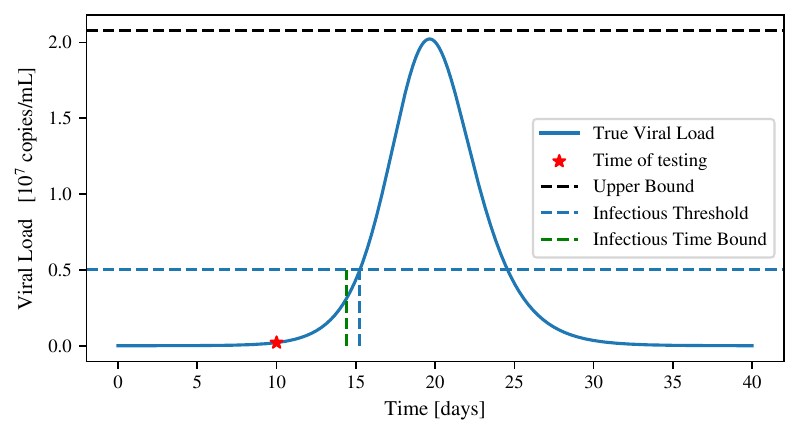}
    \caption{Comparison of true evolution of viral load, the upper bound for the peak and the time to become infectious. }
    \label{fig:viral_traj}
\end{figure}

Fig \ref{fig:viral_traj} illustrates the viral load trajectory for a representative set of parameters ($\beta=1.2\times10^8$, $\delta=1$, $p=1$, and $c=2.3$), assuming an initial viral load of $100$ copies/mL. In this scenario, the host is tested on day $10$, yielding a viral load of $2 \times 10^5$ copies/mL. Accounting for a measurement error of $\pm 1\times10^5$ copies/mL, the upper estimate $\overline{V}_s$ is set to $3 \times10^5$ copies/mL. The dashed black line represents the theoretical upper bound for the peak viral load, calculated via \eqref{eq:v_peak_test_bound}. Furthermore, we validate \eqref{eq:time_thresh} by establishing a lower bound on the time required to reach a viral load of $5 \times 10^6$ copies/mL. This threshold and the actual time at which it is reached are indicated by dashed blue lines, while the theoretical lower bound derived from \eqref{eq:time_thresh} is denoted by the dashed green line.

\begin{figure}
    \centering
    \includegraphics[width=0.75\linewidth]{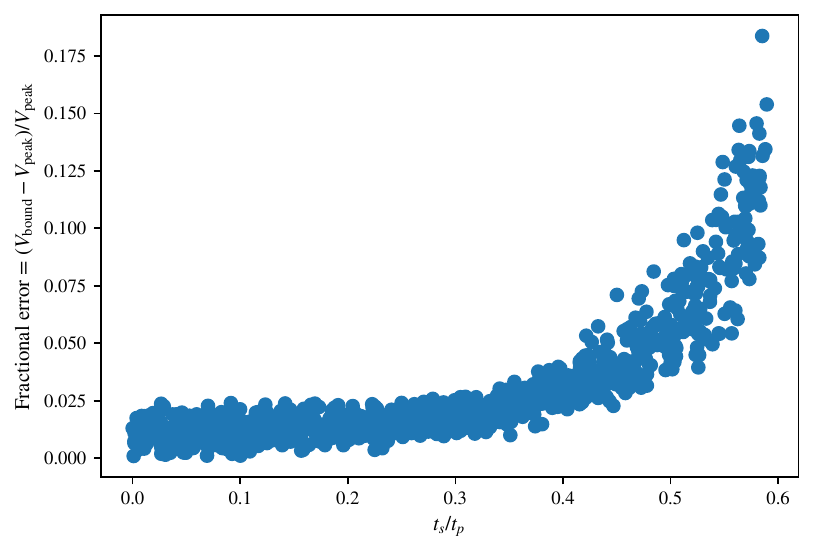}
    \caption{Variation of fractional error of the viral load upper bound with relative time of testing.}
    \label{fig:peak_err_time}
\end{figure}

We also evaluate the tightness of the upper bound derived in \eqref{eq:v_peak_test_bound} as a function of the testing time across 1000 random trials, as shown in Fig \ref{fig:peak_err_time}. 
Since the rate of viral evolution varies across hosts, it is more informative to evaluate the relative time of testing, defined as the testing time $t_s$ normalized by the time of peak viral load $t_p$, rather than the absolute time. Similarly, the difference between the theoretical upper bound and the true peak is normalized by the true peak to yield the fractional error. 
Fig \ref{fig:peak_err_time} clearly indicates that the fractional error of the upper bound increases as the relative time of testing increases, meaning that a tighter upper bound for the peak can be obtained with earlier testing. 
This result is mathematically expected, as the inequalities used to derive the bound, such as \eqref{eq:bound_u} and \eqref{eq:bound_i}, become tighter as $t \to 0$.

\begin{figure}
    \centering
    \includegraphics[width=0.75\linewidth]{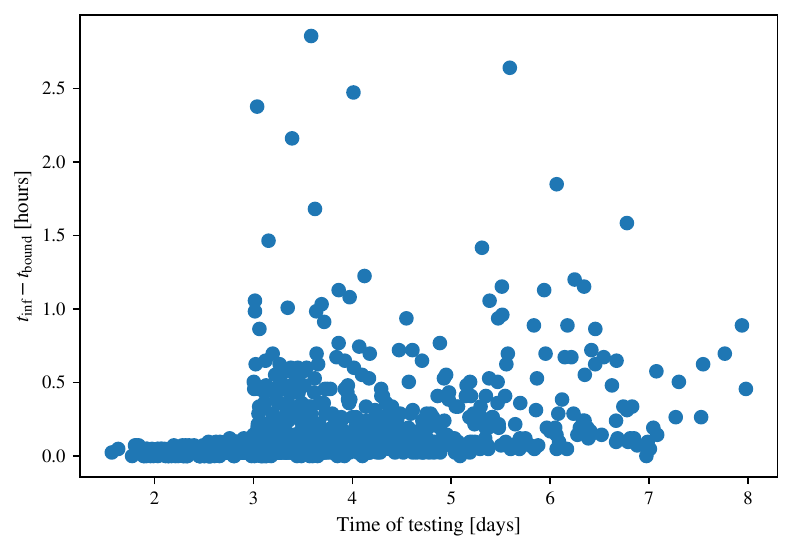}
    \caption{Variation of the difference between true time of becoming infectious and the predicted lower bound, with the time of testing.}
    \label{fig:t_inf_err}
\end{figure}

In Fig \ref{fig:t_inf_err}, we demonstrate the tightness of the bound derived in \eqref{eq:time_thresh} across 1000 trials. 
As formulated in Problem \ref{prob:time}, testing now occurs during the incubation stage, resulting in very small viral load measurements. 
Consequently, we assume measurement accuracies consistent with ddPCR techniques ($\pm 100$ copies/mL). 
The threshold for infectiousness is taken to be $10^5$ copies/mL, based on findings in \cite{zhou2023viral_thresh, van2021duration_thresh}. 
Fig \ref{fig:t_inf_err} reveals that the lower bound is much tighter when tests are administered close to the date of initial infection.
However, even in the worst-case scenario, the error between the theoretical bound and the actual time of infectiousness is around 3 hours, which is considerably tight given that the infection occurs over the course of $30-40$ days.

\section{CONCLUSION}

In this paper, we utilized the Target Cell Limited model (UIV model) to predict in-host viral evolution based on a single uncertain viral load measurement. 
By introducing a novel change of coordinates and using the theory of monotone systems, we established a rigorous upper bound for the patient's peak viral load and a lower bound for the time to reach infectiousness.
We validated our results using numerical simulations based on real COVID-19 patient data.
The simulations led us to conclude that earlier testing led to progressively tighter bounds for both quantities.
These theoretical bounds bridge the gap between mathematically rigorous epidemiological modelling and practical clinical utility, providing robust prognostic guarantees without requiring complete state information.

\addtolength{\textheight}{-12cm}   





\bibliographystyle{IEEEtran}
\bibliography{IEEEabrv,refs}

\end{document}